\documentclass[journal]{IEEEtran}

\usepackage{tikz}
\usepackage{amsmath}

\usepackage[utf8]{inputenc}
\usepackage{amsmath,amssymb,amsfonts}
\usepackage{graphicx}
\usepackage{xurl} 
\usepackage[hidelinks]{hyperref}
\usepackage{textcomp}
\usepackage{xcolor}
\usepackage{pifont}
\usepackage{tabularx}
\usepackage{threeparttable}
\usepackage[ruled,vlined,linesnumbered]{algorithm2e} 
\usepackage{booktabs} 
\usepackage[normalem]{ulem}
\usepackage{multirow} 
\usepackage{cite}
\usepackage{makecell}
\usepackage{amsthm}
\DeclareMathOperator{\Enc}{Enc}
\DeclareMathOperator{\Adv}{Adv}
\DeclareMathOperator{\View}{View}
\DeclareMathOperator{\Sim}{Sim}        
\DeclareMathOperator{\negl}{negl}    

\usepackage{array}  
\newcommand{\vect}[1]{\mathbf{#1}}
\newcommand{\algofont}[1]{\textsf{#1}}
\newcommand{\ct}{\textit{ct}}
\newcommand{\ID}{\textit{ID}}

\newcommand{\fullcircle}{\tikz[baseline=-0.6ex]\fill (0,0) circle (0.6ex);}
\newcommand{\emptycircle}{\tikz[baseline=-0.6ex]\draw (0,0) circle (0.6ex);}
\newcommand{\halfcircle}{\tikz[baseline=-0.6ex]{\draw (0,0) circle (0.6ex);\begin{scope}\clip (0,0) circle (0.6ex);\fill (-0.6ex,-0.6ex) rectangle (0,0.6ex);\end{scope}}}

\theoremstyle{plain}
\newtheorem{theorem}{Theorem}[section]
\newtheorem{lemma}[theorem]{Lemma}
\newtheorem{corollary}[theorem]{Corollary}
\theoremstyle{definition}
\newtheorem{definition}[theorem]{Definition}
\theoremstyle{remark}
\newtheorem{remark}[theorem]{Remark}

\begin{document}

\title{PRAG: End-to-End Privacy-Preserving Retrieval-Augmented Generation}
\author{Zhijun Li, Minghui Xu, Member, IEEE, Huayi Qi, Wenxuan Yu,  Tingchuang Zhang, Qiao Zhang, GuangYong Shang, Zhen Ma, Xiuzhen Cheng, Fellow, IEEE%
\thanks{\spaceskip=.3333em\xspaceskip=.3333em\relax Zhijun Li, Minghui Xu, Wenxuan Yu, Tingchuang Zhang, Qiao Zhang, and Xiuzhen Cheng are with Shandong University. (e-mail: richikun2014@gmail.com, mhxu@sdu.edu.cn, 202115114@mail.sdu.edu.cn, tczhang@mail.sdu.edu.cn, qiao.zhang@sdu.edu.cn, xzcheng@sdu.edu.cn).}%
\thanks{\spaceskip=.3333em\xspaceskip=.3333em\relax Huayi Qi is with Tsinghua University. (e-mail: huayiqi@tsinghua.edu.cn).}%
\thanks{\spaceskip=.3333em\xspaceskip=.3333em\relax GuangYong Shang and Zhen Ma are with Inspur Yunzhou Industrial Internet Co., Ltd. (e-mail: shangguangyong@inspur.com, mazhenrj@inspur.com).}%

\thanks{Corresponding author: Minghui Xu}}


\maketitle

This work has been submitted to the IEEE for possible publication. 
Copyright may be transferred without notice, after which this version 
may no longer be accessible.

\begin{abstract}

Retrieval-Augmented Generation (RAG) is essential for enhancing Large Language Models (LLMs) with external knowledge, but its reliance on cloud environments exposes sensitive data to privacy risks. Existing privacy-preserving solutions often sacrifice retrieval quality due to noise injection or only provide partial encryption. We propose PRAG, an end-to-end privacy-preserving RAG system that achieves end-to-end confidentiality for both documents and queries without sacrificing the scalability of cloud-hosted RAG. PRAG features a dual-mode architecture: a non-interactive PRAG-I utilizes homomorphic-friendly approximations for low-latency retrieval, while an interactive PRAG-II leverages client assistance to match the accuracy of non-private RAG. To ensure robust semantic ordering, we introduce Operation-Error Estimation (OEE), a mechanism that stabilizes ranking against homomorphic noise. Experiments on large-scale datasets demonstrate that PRAG achieves competitive recall (72.45\%–74.45\%), practical retrieval latency, and strong resilience against graph reconstruction attacks while maintaining end-to-end confidentiality. This work confirms the feasibility of secure, high-performance RAG at scale.
\end{abstract}

\begin{IEEEkeywords}
Privacy-preserving RAG, homomorphic encryption, secure similarity retrieval, access-pattern leakage, encrypted HNSW.
\end{IEEEkeywords}


\section{Introduction}
With the rapid proliferation of Large Language Models (LLMs) in question-answering and decision-support settings, combined with recent advances in knowledge-intensive artificial intelligence systems, an increasing number of users are investigating the use of Retrieval-Augmented Generation (RAG) to construct private knowledge-base applications over proprietary data~\cite{khan2025evidencebot,wei2025privacy}. In particular, RAG facilitates the integration of LLMs with intelligent retrieval methods, thereby enabling the deployment of knowledge-centric applications in a flexible, rapid, and low-code fashion~\cite{xu2024generative,zhao2025rag,yu2025ekrag}.

Motivated by these advantages, cloud-based RAG has emerged as the predominant deployment paradigm~\cite{weerasekara2025privacy,cheng2025remoterag}. Major cloud providers, including Amazon, Google, and Alibaba Cloud, now offer native capabilities for seamlessly integrating personal and enterprise document repositories into RAG pipelines~\cite{aws_bedrock_rag, google_vertex_rag, aliyun_model_studio}. 
This allows organizations to rapidly deploy AI agents with scalable, low-latency access to proprietary knowledge, but also causes sensitive document and retrieval indices to reside persistently on cloud servers, exposing a new attack surface. In practice, authorized users may even access such cloud-hosted knowledge bases from mobile devices, further extending both the convenience and the exposure surface of this deployment model. A series of large-scale data breaches since 2024 demonstrates the vulnerability of cloud-hosted data at scale. Notably, the Mother of All Breaches (MOAB) exposed over 26 billion credential records~\cite{cybernews_moab_2024}, while other incidents targeted universities, government contractors, and major technology vendors, including Huawei and Google~\cite{wsu_cyber_update_2025, cybernews_huawei_2025, forbes_google_leak_2025, bleeping_sedgwick_breach_2026}.
These events demonstrate that storing sensitive data in the cloud without strong cryptographic protection poses a systemic and ongoing privacy risk.

\begin{table}[!ht]
\centering
\scriptsize
\begin{threeparttable} 
\setlength{\tabcolsep}{4pt}
\caption{Comparison of Privacy-Preserving RAG Schemes}
\label{tab:comparison}
\begin{tabularx}{\columnwidth}{@{}l 
    >{\hsize=0.5\hsize\raggedright\arraybackslash}X 
    >{\hsize=0.8\hsize\raggedright\arraybackslash}X 
    >{\hsize=1.0\hsize\raggedright\arraybackslash}X 
    >{\hsize=1.6\hsize\raggedright\arraybackslash}X @{}}
\toprule
Scheme & Method & Doc ($\mathcal{D}$) Protection & Query ($\mathcal{Q}$) Protection & Provable Guarantee \\
\midrule
RemoteRAG~\cite{cheng2025remoterag}
& DP\tnote{1}
& Plaintext
& DP-perturbed
& $\varepsilon$-DP ($\mathcal{Q}$) \\
\midrule
\makecell[l]{DP-RAG~\cite{grislain2025rag},\\ DPVoteRAG~\cite{koga2024privacy},\\ Text-DP~\cite{yu2024textual}}
& DP
& DP-perturbed 
& Plaintext 
& $\varepsilon$-DP  ($\mathcal{D}$)\\
\midrule
\makecell[l]{SANNS~\cite{chen2020sanns},\\ TipToe~\cite{Henzinger2023PrivateWS}}
& HE\tnote{2}
& Plaintext
& Encrypted
& Semantic security ($\mathcal{Q}$) \\
\midrule
Pacmann~\cite{Zhou2024PacmannEP}
& PIR\tnote{3}
& Plaintext
& PIR-protected
& PIR security  ($\mathcal{Q}$)\\
\midrule
Fortify~\cite{chrapek2024fortify} & TEE & Plaintext & TEE-protected & Trusted hardware ($\mathcal{Q}$)\\
\midrule
\textbf{PRAG (Ours)} & HE
& Encrypted & Encrypted & Semantic security ($\mathcal{D}, \mathcal{Q}$)\\
\bottomrule
\end{tabularx}

\begin{tablenotes}
    \item[1] DP: Differential Privacy.
    \item[2] HE: Homomorphic Encryption.
    \item[3] PIR: Private Information Retrieval.
\end{tablenotes}
\end{threeparttable}
\end{table}

Current research into privacy-preserving RAG explores a spectrum of solutions ranging from local deployment to statistical noise injection and hardware-based execution environments, each offering different trade-offs between security and utility as shown in Table~\ref{tab:comparison}. To avoid exposing documents and queries to the cloud, local deployments~\cite{wei2025privacy,weerasekara2025privacy,khan2025evidencebot} keep all data on the client side. However, this design is ill-suited for shared, cloud-hosted RAG settings because it lacks support for multi-user collaboration, centralized administration, and scalable management of large-scale knowledge bases.

To address these limitations while remaining in the cloud, some systems adopt Differential Privacy (DP) to reduce information leakage by injecting noise into documents~\cite{koga2024privacy,yu2024textual,grislain2025rag}. While these approaches protect statistical privacy, they do not provide provable data confidentiality; the cloud still stores perturbed document representations in plaintext and can observe access patterns during retrieval. RemoteRAG~\cite{cheng2025remoterag} improves upon this by perturbing user queries, preventing the cloud from directly observing query contents. Nevertheless, the underlying documents remain stored and processed in plaintext, leaving the hosted knowledge base visible to the cloud provider; SANNS~\cite{chen2020sanns}, TipToe~\cite{Henzinger2023PrivateWS}, and Pacmann~\cite{Zhou2024PacmannEP} have the same limitation in that they protect query privacy but not document privacy, which may be acceptable for public dataset but not for private knowledge bases. Alternatively, TEE-based RAG systems~\cite{chrapek2024fortify} protect query processing by executing retrieval within trusted hardware enclaves. This significantly reduces data exposure during computation, though its security remains tied to specific hardware trust assumptions. This raises a new problem:

\begin{center}
    \textit{Is it possible to achieve end-to-end confidentiality for both documents and queries without sacrificing the scalability of cloud-hosted RAG?}
\end{center}


A practical solution must maintain the corpus in the cloud while providing robust cryptographic protection and preserving the full functionality required by RAG. In particular, a privacy-preserving RAG system must support three core operations: \textit{secure similarity computation} between a query embedding and document embeddings, \textit{efficient ranking} to select the top-$k$ contexts for generation, and \textit{real-time updates} to keep the knowledge base synchronized with continuously changing data. Approximate Nearest Neighbor (ANN) search, with its sub-linear query complexity and favorable recall-latency trade-off, is naturally suited for retrieval over large-scale embedding corpora in modern RAG systems. To ensure the cloud learns nothing about document contents, these operations cannot be performed on plaintext and must instead be executed entirely in the ciphertext domain. Under this end-to-end encrypted cloud setting, these requirements effectively leave \textit{Secure Similarity Retrieval (SSR)}~\cite{huang2019toward,zheng2020achieving,chen2020sanns,wang2024secure,tan2021efficient} as the most viable design direction, since it enables similarity retrieval directly over encrypted data in the cloud. However, existing Secure Similarity Retrieval schemes still have limitations in practical RAG deployments.

\subsection{Why SSR Fails in Privacy-Preserving RAG}

\textbf{Partial Support for RAG Functionality in Secure Similarity Retrieval.} 
Although \textit{secure similarity computation}, \textit{efficient ranking}, and \textit{real-time updates} are the pillars of privacy-preserving RAG systems, current SSR schemes fail to provide these three capabilities concurrently, as shown in Table~\ref{tab:SSRcomparison}. The following analysis identifies the specific gaps in existing solutions.

First, several schemes support similarity computation but rely on encryption primitives that have been shown to be insecure under practical attack models, such as ASPE-based constructions~\cite{huang2019toward,zheng2020achieving}. Although these approaches support encrypted similarity computation, their security guarantees are limited for protecting knowledge bases~\cite{li2019insecurity}. 
Second, efficient RAG retrieval requires ranking similarity scores to select the most relevant context. However, schemes such as SPE-Sim~\cite{zheng2020achieving} and SESR~\cite{wang2024secure} do not support numeric comparison in the ciphertext domain. Without native encrypted comparison, these systems cannot perform top-$k$ ranking without leaking sensitive score distributions or requiring trusted-party decryption.
Third, the dynamic nature of RAG requires support for evolving datasets, yet many schemes, such as PDQ~\cite{tan2021efficient} and SESR~\cite{wang2024secure}, lack real-time update capabilities. Their index structures often rely on global organization or rigid partitioning, making new insertions computationally prohibitive. This limitation affects the handling of continuously evolving knowledge bases. 

\begin{table}[!t]
\footnotesize
\centering
\begin{threeparttable}
    \caption{Comparison of Secure Similarity Retrieval Schemes}
    \label{tab:SSRcomparison}
    \begin{tabularx}{\columnwidth}{l>{\hsize=1.3\hsize}X>{\hsize=0.8\hsize}X>{\hsize=0.8\hsize}X}
        \toprule
        Scheme & Secure Similarity Computation & Efficient Ranking & Real-Time Updates \\
        \midrule
        Huang et al.~\cite{huang2019toward} & \emptycircle & \fullcircle & \halfcircle\\
        SPE-Sim~\cite{zheng2020achieving}   & \emptycircle & \emptycircle & \fullcircle \\
        SESR~\cite{wang2024secure}          & \fullcircle & \emptycircle & \emptycircle \\
        PDQ~\cite{tan2021efficient}         & \fullcircle & \halfcircle & \emptycircle \\
        MSecKNN~\cite{Li2025MSecKNNMS}          & \fullcircle & \halfcircle & \fullcircle \\
        FSkNN~\cite{Fukuchi2025SecureKF}          & \fullcircle & \halfcircle & \fullcircle \\
        \midrule
        \textbf{PRAG (Ours)}               & \fullcircle & \fullcircle & \fullcircle \\
        \bottomrule
    \end{tabularx}
    \begin{tablenotes}
        \small
        \item[] \fullcircle: Supported; \emptycircle: Not supported; \halfcircle: Partial support under specific settings or with performance constraints.
    \end{tablenotes}
\end{threeparttable}
\end{table}

\textbf{Distortion of retrieval ranking.}
      Even when encrypted comparison is available, or when the system relies on an HE-friendly ranking surrogate, the ordering of similarity scores can change under CKKS-style approximate arithmetic. In privacy-preserving RAG, each similarity score is computed through multiple steps, including packed inner products, rescaling and rounding, and in fully homomorphic settings, polynomial approximations for selection. These steps introduce errors that may vary across candidates and accumulate during retrieval~\cite{Lee2021HighPrecisionBO,bae2024bootstrapping}.

      This leads to two challenges. First, top-$k$ retrieval is sensitive to score differences~\cite{liu2023lost,yu2024defense}. If two candidates have similar true scores, even a small error can flip their order. Second, error depends on both input data and retrieval path. Different candidates may accumulate different error due to different operation counts, such as different traversal lengths, making a single absolute error bound insufficient. The core challenge is therefore to keep total perturbation below the similarity margin that separates relevant documents from near ties, so encrypted retrieval preserves ranking with high probability.

\subsection{Our Contribution}
To address these challenges, we propose PRAG, an end-to-end privacy-preserving RAG framework that enables high-accuracy retrieval over encrypted knowledge bases. Our contributions are summarized as follows:
\begin{itemize} 
\item \textbf{Bridging secure similarity retrieval and basic RAG functionality.} To address the incompatibility between existing secure similarity retrieval schemes and basic RAG requirements, we develop a CKKS-compatible encrypted retrieval backend that supports (1) similarity scoring via homomorphic inner products, (2) ranking through Chebyshev polynomial approximations or client-assisted comparisons, and (3) real-time updates using a ciphertext-resident hierarchical index with homomorphic clustering and encrypted HNSW navigation.

\item \textbf{An end-to-end privacy-preserving RAG system with dual-mode operation.} We propose PRAG, an end-to-end privacy-preserving RAG system that supports two deployable modes within a unified encrypted architecture: a non-interactive PRAG-\uppercase\expandafter{\romannumeral1} mode that performs retrieval fully in the ciphertext domain via homomorphic-friendly approximations, and an interactive, client-assisted PRAG-\uppercase\expandafter{\romannumeral2} mode that leverages limited client-side assistance to improve ranking precision and retrieval accuracy, achieving average top-10 retrieval latencies of under 1.29 seconds and around 7.91 seconds, respectively, on a 100,000-sample dataset. In our experiments, the two modes achieve recall rates of 72.45\% and 74.45\%, respectively, while sharing the same encrypted corpus and index and enabling on-demand switching without re-encrypting data or rebuilding indices. The system further incorporates access-pattern mitigation against graph reconstruction attacks, together with formal leakage analysis and empirical evaluation of the resulting security-utility trade-off.

\item \textbf{Noise-aware ranking stabilization via Operation-Error Estimation (OEE).} We introduce OEE, a noise control method that models how approximation error and CKKS noise affect similarity scores during encrypted retrieval. Rather than focusing only on numeric accuracy, OEE targets preservation of the relative order of similarity scores, which significantly impacts retrieval quality in RAG. Specifically, OEE accounts for: (1) polynomially approximated comparisons, which occur only in PRAG-\uppercase\expandafter{\romannumeral1}; and (2) homomorphic noise and hierarchical index traversal, which affect both PRAG-\uppercase\expandafter{\romannumeral1} and PRAG-\uppercase\expandafter{\romannumeral2}. Using OEE, we guide parameter selection to bound ranking errors, making ranking stability an explicit design goal for encrypted retrieval.
\end{itemize}

The remainder of this paper details the system architecture, security model,
and protocol design, followed by experimental evaluation and discussion.

\section{Related Work}

\subsection{Privacy-Preserving RAG (PRAG)}
Existing privacy-preserving RAG research can be broadly organized into three routes: statistical perturbation of the retrieval pipeline, system-level trust decomposition, and selective plaintext assistance during retrieval or ranking. This taxonomy clarifies the central gap in the literature: prior systems usually strengthen one aspect of privacy, but they do not simultaneously provide end-to-end corpus confidentiality, accurate semantic retrieval, and practical cloud deployment.

RAG first introduced by Lewis et al.~\cite{Lewis2020} and later refined by Active RAG~\cite{Jiang2023} established the retrieval-generation paradigm, but these works do not consider the privacy risks of sensitive corpora. Subsequent analyses by Huang et al.~\cite{huang2023privacy} and Zeng et al.~\cite{zeng2024good} showed that retrieved knowledge and training signals can leak private information. One line of defense therefore adds statistical noise to the retrieval pipeline. DP-based approaches~\cite{koga2024privacy,yu2024textual,grislain2025rag} and embedding perturbation methods such as PRESS~\cite{he2025press} improve privacy for queries or outputs, but they inevitably distort similarity relationships and still leave the outsourced corpus exposed in plaintext.

A second route reduces reliance on a single cloud server through architectural redesign. Federated systems such as FRAG~\cite{zhao2024frag}, blockchain-based systems such as D-RAG~\cite{e_andersen2025d}, split execution~\cite{wei2025privacy}, and isolated-enclave solutions~\cite{weerasekara2025privacy,khan2025evidencebot} all limit the trust placed in one component of the system. However, these designs typically introduce substantial communication or coordination overhead and still do not provide an efficient encrypted vector index that supports RAG-style semantic search.

A third route keeps part of the pipeline encrypted but reveals intermediate information to recover accuracy or efficiency. Privacy-Aware RAG~\cite{zhou2025privacy}, for example, requires selective decryption during ranking, which breaks end-to-end confidentiality and reintroduces exposure during the most ranking-sensitive stage of retrieval.

In summary, the three main RAG directions correspond to three incomplete trade-offs: perturbation-based methods sacrifice utility, trust-decomposition methods sacrifice practicality, and plaintext-assisted methods sacrifice full confidentiality. PRAG is designed to close this gap by supporting clustering, HNSW traversal, and ranking over a fully encrypted corpus under an untrusted server model.

\subsection{Secure Similarity Retrieval }
Existing secure similarity retrieval (SSR) work can likewise be summarized along three routes: query-private retrieval over plaintext corpora, encrypted similarity computation with weak or limited ranking semantics, and structured encrypted indexes that still fall short of RAG's ANN requirements. This lens makes clear why current SSR primitives cannot be plugged into RAG directly without losing either security or functionality.

Early works such as SANNS~\cite{chen2020sanns} and TipToe~\cite{Henzinger2023PrivateWS} protect the query while leaving the document embeddings exposed on the server. Pacmann~\cite{Zhou2024PacmannEP} shifts crucial computation back to the client because it cannot support server-side comparison and similarity computation securely. These systems therefore do not meet the end-to-end confidentiality requirement of outsourced RAG.

Huang et al.~\cite{huang2019toward} proposed an ASPE-based method, but ASPE is not secure against ciphertext-only attacks~\cite{li2019insecurity}. Subsequent efforts extended query semantics to support set similarities, as in SPE-Sim~\cite{zheng2020achieving} and SESR~\cite{wang2024secure}, or numerical comparisons over integers, as in PDQ~\cite{tan2021efficient}. Even when these systems support secure arithmetic, they do not provide the continuous, ranking-preserving comparisons needed for semantic vector retrieval in RAG.

Some methods~\cite{xia2022format} use clustering or quantization, but they partition the feature space rigidly and break semantic continuity across clusters. Real-time update mechanisms from keyword-based systems~\cite{stefanov2014practical, etemad2018efficient, dou2024dynamic} are not directly compatible with continuous embedding retrieval. Recent designs based on R-trees~\cite{li2023enabling,wang2023efficient} or specialized encodings~\cite{sun2023soar,lee2025bit} improve structure, yet none jointly support ANN graph navigation, iterative centroid refinement, and semantically faithful vector operations.

Overall, existing SSR schemes each cover only a subset of the requirements for encrypted RAG. For example, SANNS~\cite{chen2020sanns} requires multiple online parties, and update operations can disrupt cluster balance and trigger costly reconstruction. As a result, no prior SSR scheme simultaneously satisfies the navigational, semantic, dynamic, and cryptographic requirements of practical privacy-preserving RAG.

\section{Preliminaries}

\subsection{Graph-Based ANN Search}
 Approximate Nearest Neighbor (ANN) search is a fundamental operation in high-dimensional vector databases that trades exactness for efficiency, returning approximate nearest neighbors with sub-linear query complexity. Among ANN approaches, graph-based methods have emerged as the state-of-the-art for high-dimensional similarity search due to their superior recall-latency trade-off. These methods construct a proximity graph over the dataset where each node represents a data point and edges connect nearby points. Search proceeds by greedily traversing the graph from an entry point, iteratively moving to neighbors closer to the query until reaching a local optimum. Compared to tree-based methods (e.g., KD-trees) that suffer from the curse of dimensionality, and hashing-based methods (e.g., LSH) that require parameter tuning for specific distance distributions, graph-based methods adaptively capture the intrinsic data manifold and achieve robust performance across diverse datasets.

Hierarchical Navigable Small World (HNSW) is a representative graph-based ANN algorithm that extends the basic proximity graph with a multi-layered hierarchical structure. It organizes data points into a multi-layered graph to perform search in a coarse-to-fine manner, where sparse upper layers facilitate rapid global navigation and dense lower layers enable accurate local refinement. At each layer, the search follows a greedy routing rule: starting from a designated entry node, the algorithm iteratively evaluates the distances to the current node's neighbors and moves to the neighbor that is closest to the query vector. This process continues until no adjacent neighbor is closer to the query than the current node, reaching a local optimum that serves as the entry point for the next lower layer.

To illustrate this mechanism concretely, consider a typical three-layer HNSW configuration consisting of a sparse top layer ($L_2$), a middle layer ($L_1$), and a dense base layer ($L_0$) containing all data points, as illustrated in Fig.~\ref{hnsw}. The retrieval process is initialized at a predefined global entry node in $L_2$, where the algorithm performs a rapid greedy traversal to "zoom in" on the general region of the query while bypassing large portions of irrelevant data. Once a local optimum is identified in $L_2$, the search transitions to $L_1$, using that optimum as the new entry point. Since $L_1$ possesses a denser topology, it provides a finer-grained routing path to further narrow the search space. Finally, the search descends into the densest layer, $L_0$, to execute a precise localized greedy search, often maintaining a dynamic candidate list to pinpoint the exact top-$k$ nearest neighbors. This hierarchical transition ensures that the search complexity scales sub-linearly, typically $O(\log N)$, while maintaining high recall in high-dimensional spaces.
\begin{figure}[!ht]
	\centering
	\includegraphics[width=0.6\columnwidth]{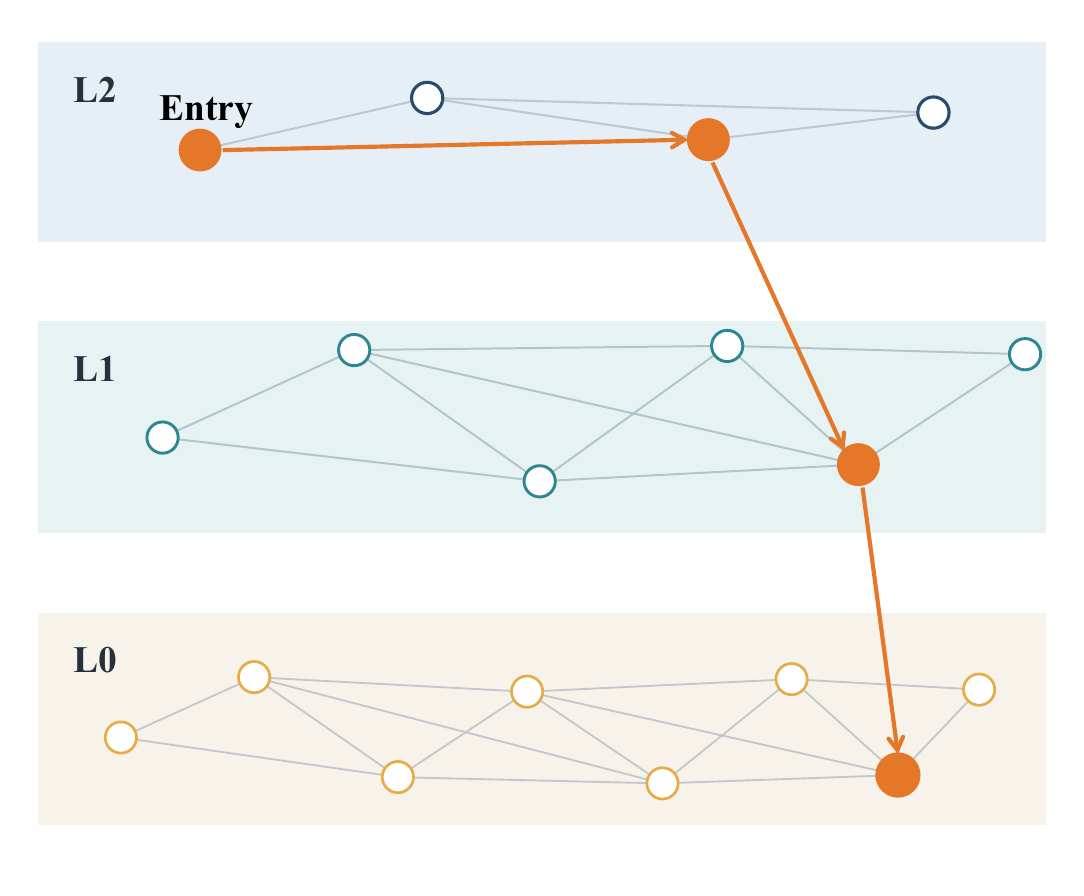}
	\caption{Schematic diagram of the HNSW structure.}   
  \label{hnsw}
\end{figure}

\subsection{K-means Clustering}
K-means clustering is a widely used unsupervised learning algorithm for partitioning $\vert \mathcal{D} \vert$ data points into $k$ clusters, where each data point belongs to the cluster with the nearest centroid. The algorithm iteratively minimizes the within-cluster sum of squared distances by updating cluster assignments and centroids until convergence. Formally, given a set of data points $\{x_1, x_2, \dots, x_n\}$ in a $d$-dimensional space, the goal is to find centroids $\{\mu_1, \dots, \mu_k\}$ that minimize
\[
\sum_{i=1}^{n} \min_{j \in \{1,\dots,k\}} \|x_i - \mu_j\|^2.
\]

\subsection{CKKS and Primitives}
We briefly introduce the CKKS homomorphic encryption abstraction used in this work. CKKS supports approximate arithmetic on packed vectors, enabling homomorphic
addition, multiplication, and slot rotations. We use the following high-level primitives:
\begin{itemize}
  \item $\mathsf{KeyGen}(\lambda)\to(\mathit{pk},\mathit{sk},\mathit{evk})$.
  \item $\mathsf{Encrypt}(\mathit{pk},\mathbf{v})\to\ct_{\mathbf{v}}$ and
        $\mathsf{Decrypt}(\mathit{sk},\ct)\to\mathbf{v}'$.
  \item $\mathsf{Add},\ \mathsf{Mul},\ \mathsf{Rotate}$: homomorphic slot-wise
        operations.
\end{itemize}

\subsection{Chebyshev Polynomial Approximation (ChebyApprox)}
Let $\vect{s} = (s_1, \dots, s_C) \in \mathbb{R}^C$ be similarity scores or distance scores, and let $\tau > 0$ be a scaling parameter. Since CKKS supports only additions and multiplications efficiently, we approximate a monotone comparison surrogate over a bounded interval with a degree-$d$ Chebyshev polynomial
\[
C_d(x) = \sum_{m=0}^{d} a_m T_m(x),
\]
where $T_m(\cdot)$ is the $m$-th Chebyshev polynomial of the first kind. We then define
\[
\algofont{ChebyApprox}(s_i) := w_i = \frac{C_d(-s_i / \tau)}{\sum_{j=1}^{C} C_d(-s_j / \tau)}, \quad i \in [C].
\]
The resulting weights $\vect{w} = (w_1, \dots, w_C)$ satisfy $\sum_{i=1}^{C} w_i = 1$ and provide an HE-friendly surrogate for comparison-based ranking.


\section{Problem Statement}
\label{sec:system_design}




\subsection{System Setting}
Our proposed framework is designed for an environment where data providers can contribute to a shared encrypted RAG database. The architecture comprises two main entities: a trusted Client, and a semi-honest Cloud Server. 

A trusted Client serves two roles. As the data owner, it normalizes private documents into embedding vectors, builds the search index, encrypts the entire dataset and index structure, and uploads the ciphertexts to the Cloud Server. As an authorized user, it transforms a query into multiple sub-queries, encrypts them, and submits them to the server; upon receiving the encrypted retrieval results from the cloud, it decrypts them and forwards the plaintext context to the large language model for response generation.

The Cloud Server is a semi-honest provider that hosts the encrypted database. Upon receiving an encrypted query, it performs all similarity-matching computations directly on ciphertexts, without any decryption key access, and returns the top-$k$ encrypted candidates to the Client. The Client then decrypts and reranks the results locally.

\subsection{Threat Model}
\label{sec:threat_model}

We model the cloud server as a Probabilistic Polynomial-Time (PPT) adversary $\mathcal{A}$ under the semi-honest model. The adversary:
\begin{itemize}
    \item Infers sensitive information from all observable data;
    \item Holds no secret key $\mathit{sk}$ and does not collude with others.
    \item Correctly executes all protocols using provided keys; 
\end{itemize}

The adversary's view consists of:
\begin{equation}
\mathsf{View}_{\mathcal{A}} = \left(\{\ct_{\mathbf{v}}^{(i)}\}_{i=1}^{|\mathcal{D}|}, \{\ct_{\mathbf{q}}^{(j)}\}_{j=1}^{Q}, \mathcal{CT}_{\text{inter}}, \mathsf{Params}, \mathcal{L}\right),
\end{equation}
where $\{\ct_{\mathbf{v}}^{(i)}\}$,$\{\ct_{\mathbf{q}}^{(j)}\}$ are encrypted database, queries, $\mathcal{CT}_{\text{inter}}$ are intermediate ciphertexts generated during computation, $\mathsf{Params}$ are public parameters such as $\lambda$, $N$, and $\tau$, and $\mathcal{L}$ is the leakage function defined in Appendix~\ref{leakage}.

\subsection{Design Goals}
\begin{itemize}

\item \textbf{Data Privacy (Semantic Confidentiality).}
All document embeddings, query embeddings, similarity scores, and final rankings must remain semantically hidden from the cloud server.
Formally, the CKKS-encrypted ciphertexts ensure IND-CPA security: for any two datasets $\mathcal{V}_0, \mathcal{V}_1$ and queries $q_0, q_1$ of equal size, the adversary's views are computationally indistinguishable.
We formalize this as semantic confidentiality.
\begin{definition}[Semantic Confidentiality]
\label{def:semantic_confidentiality}
Our framework achieves semantic confidentiality if for any PPT adversary $\mathcal{A}$, there exists a PPT simulator $\mathcal{S}$ such that:
\begin{equation}
\left\{\mathsf{View}_{\mathcal{A}}(\mathcal{D}, \{\mathbf{q}^{(i)}\})\right\} \stackrel{c}{\approx} \left\{\mathcal{S}\left(\mathcal{L}(\mathcal{D}, \{\mathbf{q}^{(i)}\}), 1^\lambda\right)\right\},
\end{equation}
where $\stackrel{c}{\approx}$ denotes computational indistinguishability.
\end{definition}
Additionally, the permitted leakage $\mathcal{L}$ consists of access patterns during HNSW traversal and cluster probing. PRAG explicitly bounds this leakage via random access perturbation to prevent graph reconstruction attacks. We categorize access-pattern-based graph reconstruction attacks as a class of LAAs, where the adversary exploits observable retrieval traces to infer hidden index topology.

\item \textbf{Ranking Stability under Noise.}
The system ensures that semantic ranking is robust to approximation error and homomorphic noise.
In particular, retrieval quality is preserved by explicitly controlling ranking deviations rather than only minimizing numeric error.

\item \textbf{Practical and Efficient Dual-mode System Design.}
The overall design targets real-world RAG workloads by supporting efficient and accurate retrieval on large-scale datasets with real-time updates, and by enabling dual-mode operation to trade latency for ranking accuracy, including fully homomorphic non-interactive retrieval and client-assisted retrieval.

\end{itemize}

\begin{table}[!ht]
\caption{Summary of Core Notations}
\label{tab:notations}
\centering
\small
\begin{tabularx}{\linewidth}{@{}l X@{}}
\toprule
\textbf{Symbol} & \textbf{Description} \\
\midrule
$\vert \mathcal{D} \vert$ 
& Dataset size \\
$ d$ 
& Vector dimension \\
$C$ 
& Number of clusters \\
$T$ 
& K-means iterations \\
$\tau$ 
& Chebyshev scaling parameter \\

$\ct_{(\cdot)}$ 
& Encrypted ciphertext, such as vector $\ct_{\mathbf{v}}$ or score $\ct_{s}$ \\
$\mathcal{N}$ 
& Set of neighbors \\
$C_{\mathsf{probe}}$ 
& Number of clusters probed per query \\
$k_{\mathsf{sub}}$ 
& Number of sub-queries generated per cluster\\

$\mathcal{S}_{\text{IP}}$ 
& Noise of a single homomorphic inner product \\
$\mathbf{v}, \mathbf{q}$ 
& Plaintext vectors (data and query) \\
$N, \lambda$ 
& CKKS polynomial modulus degree and security parameter \\
$\mathit{pk}, \mathit{sk}, \mathit{evk}$ 
& Public key, secret key, and evaluation keys ($\mathit{rlk}, \mathit{galk}$) \\
$L, M, \text{Dep}_h$ 
& HNSW parameters: max layers, max degree, and search depth \\
\bottomrule
\end{tabularx}
\end{table}

\section{Privacy-Preserving RAG}
\label{sec:framework_and_algorithms}
\begin{figure}[!ht]
	\centering
	\includegraphics[width=\columnwidth]{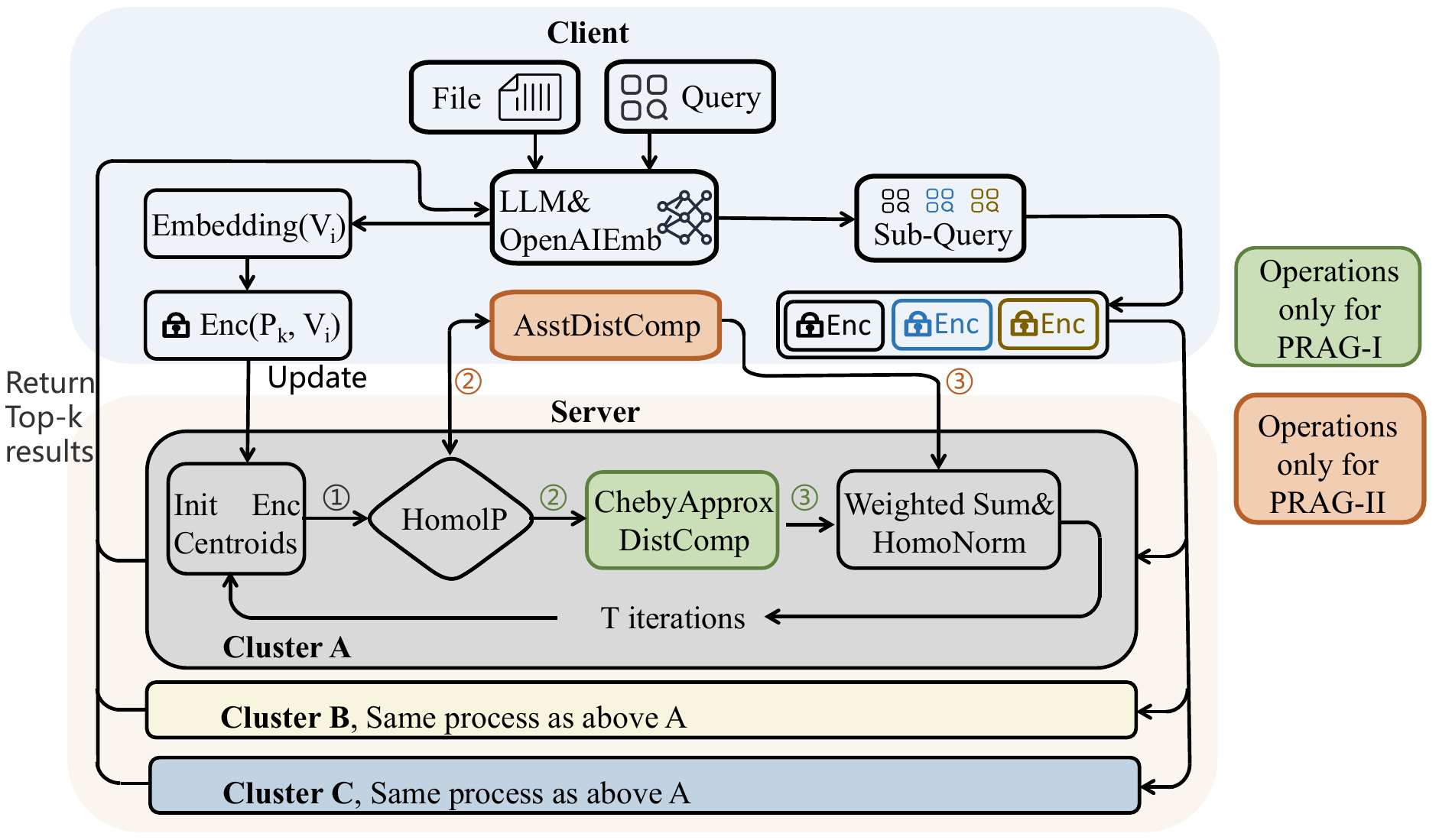}
	\caption{Architecture of PRAG-\uppercase\expandafter{\romannumeral1} and PRAG-\uppercase\expandafter{\romannumeral2}. Note: Numbers in green circles denote steps exclusive to PRAG-\uppercase\expandafter{\romannumeral1}, while numbers in brick-red circles indicate steps specific to PRAG-\uppercase\expandafter{\romannumeral2}; the remaining steps are common to both schemes.}   
  \label{system}
\end{figure}
In this section, we present two complementary privacy-preserving RAG frameworks as shown in Fig.~\ref{system}: 
a non-interactive PRAG-\uppercase\expandafter{\romannumeral1} that achieves end-to-end encrypted retrieval,
and an interactive PRAG-\uppercase\expandafter{\romannumeral2} that enhances ranking precision.
Both frameworks share the same data preprocessing and response generation stages, but differ in Setup, Retrieve and Update stages. The key notations used in our protocols are shown in Table~\ref{tab:notations}.

\subsection{\texorpdfstring{PRAG-\uppercase\expandafter{\romannumeral1}}{PRAG-I}}
\label{subsec:chebyshev_based}
The PRAG-\uppercase\expandafter{\romannumeral1} framework performs retrieval homomorphically on the cloud using CKKS, and uses a Chebyshev polynomial approximation for ranking. 

\subsubsection{Detailed Protocols}\label{detail}
We present the pseudocode for Protocols $\Pi_{\mathsf{Setup}}$, $\Pi_{\mathsf{Retrieval}}$, and $\Pi_{\mathsf{Update}}$ in Section~\ref{detail}, and Protocol~$\Pi_{\mathsf{LocalRAG}}$ in Appendix~\ref{algorithm}.

\par \textbf{Setup Stage: Encrypted Upload and Index Construction.}

As shown in Protocol~\ref{setup}, the client encrypts each vector $\mathbf{v}_i \in \mathcal{V}$ with its identifier $\ID_i$ under $\mathit{pk}$ to obtain ciphertexts $\ct_i$. CKKS parameters, including the polynomial modulus degree $N$, the moduli chain $q_i$, and the initial scaling $\Delta$, are chosen to match the multiplicative depth of the homomorphic inner-product circuit. The client then uploads $\{\ct_i\}$ and evaluation keys $\mathit{evk}$ to the untrusted server, which stores them as a two-level encrypted index $\mathcal{I}$.

\begin{algorithm}[!ht]
\SetAlgoLined
\SetKwInOut{KwIn}{KwIn}
\SetKwInOut{KwOut}{KwOut}
\footnotesize
\caption{Protocol $\Pi_{\mathsf{Setup}}$}
\label{setup}
\KwIn{Plaintext vectors $\mathcal{V}$, public key $\mathit{pk}$, evaluation keys $\mathit{evk}$, parameters $C, T, \tau$.}
\KwOut{Encrypted index $\mathcal{I}$ at the server and encrypted centroids $\{\ct_{\mu_j}\}$ returned to the client.}

\smallskip
\tcp{Client-Side Encryption}
$\mathcal{CT}_{\mathcal{V}} \gets \{(\ID_i, \mathsf{Enc}(\mathit{pk}, \mathbf{v}_i)) \mid (\ID_i, \mathbf{v}_i)\in\mathcal{V}\}$\;
Upload $\mathcal{CT}_{\mathcal{V}}$ and $\mathit{evk}$ to the server\;

\smallskip
\tcp{Level-1: Encrypted Chebyshev-Weighted K-means Clustering}
Randomly initialize encrypted centroids $\{\ct_{\mu_1},\dots,\ct_{\mu_C}\}$\;
\For{$t \gets 1$ \KwTo $T$}{
  Initialize $\mathcal{P}_j.\mathsf{sum}\gets 0$ and $\mathcal{P}_j.\mathsf{count}\gets 0$ for all $j\in[C]$\;
  \ForEach{$(\ID_i,\ct_i) \in \mathcal{CT}_{\mathcal{V}}$}{
    Compute $\ct_{s_j}\gets \mathsf{HomoIP}(\ct_i,\ct_{\mu_j})$ for all $j$\;
    Compute $\ct_{w_j}\gets \mathsf{ChebyApprox}(\{\ct_{s_j}\},\tau)$\;
    Update $\mathcal{P}_j.\mathsf{sum}\gets \mathcal{P}_j.\mathsf{sum}+\ct_i\cdot\ct_{w_j}$ and $\mathcal{P}_j.\mathsf{count}\gets \mathcal{P}_j.\mathsf{count}+\ct_{w_j}$\;
  }
  \For{$j \gets 1$ \KwTo $C$}{
    \If{$\mathcal{P}_j.\mathsf{count}\neq 0$}{
      $\ct_{\mu_j}\gets \mathsf{HomoNorm}(\mathcal{P}_j.\mathsf{sum},\mathcal{P}_j.\mathsf{count})$\;
    }
  }
}
Store encrypted centroids as $\mathcal{I}_{\mathsf{cluster}}$\;

\smallskip
\tcp{Level-2: Encrypted HNSW Construction}
\For{$j \gets 1$ \KwTo $C$}{
  Construct encrypted HNSW graph $G_j$ using $\mathsf{SecureInsert}$\;
}
$\mathcal{I}\gets(\mathcal{I}_{\mathsf{cluster}},\mathcal{I}_{\mathsf{HNSW}})$\;
Send encrypted centroids $\{\ct_{\mu_j}\}$ to the client and output $\mathcal{I}$\;
\end{algorithm}

The server then runs encrypted K-means. It initializes centroids by sampling $C$ vectors and encrypting them as $\{\ct_{\mu_1}, \dots, \ct_{\mu_C}\}$. For $T$ iterations, it computes homomorphic inner products $\ct_{s_j}$ between each data point $\ct_i$ and all centroids, applies $\algofont{ChebyApprox}$ with scaling parameter $\tau$ to obtain weights $\ct_{w_j}$, accumulates weighted sums in each partition $\mathcal{P}_j$, and updates centroids via $\algofont{HomoNorm}$. Details of $\algofont{HomoNorm}$ are given in Appendix~\ref{algorithm}.

Once clustering is complete, the encrypted centroids form the cluster-level index $\mathcal{I}_{\text{cluster}}$. For the second level, the server constructs an encrypted HNSW graph $G_j$ for each cluster by iteratively inserting assigned encrypted vectors using the secure insertion procedure in Protocol~\ref{protocol:secure_update}. The full index $\mathcal{I}$ combines both levels. Finally, encrypted centroids are returned to the client and decrypted into plaintext cluster metadata $C_{\text{meta}}$, which contains centroids and semantic labels. Because this metadata is aggregate, it does not reveal individual data points and can be safely used for subsequent retrieval.

\par \textbf{Retrieve Stage: Hierarchical and Fused Secure Retrieval.} 
As shown in Protocol~\ref{protocol:secure_retrieval}, retrieval uses a hierarchical index and query transformation while keeping server-side computation efficient. The process starts with client-side query decomposition. The original query $q$ is embedded into $\vect{v}_q$ via OpenAIEmbeddings (OpenAI\_Emb). Using plaintext cluster metadata $C_{\text{meta}}$, which is decrypted from the server during setup, the client computes similarities between $\vect{v}_q$ and cluster centroids, then selects the top-$C_{\text{probe}}$ clusters $\mathcal{J}_{\text{best}}$ for semantic routing without revealing query content to the server.

\begin{algorithm}[t]
\SetAlgoLined
\SetKwInOut{KwIn}{KwIn}
\SetKwInOut{KwOut}{KwOut}
\footnotesize
\caption{Protocol $\Pi_{\mathsf{SecureRetrieval}}$}
\label{protocol:secure_retrieval}
\KwIn{Query $q$, top-$k$ parameter $K$, parameters $(C_{\mathsf{probe}}, k_{\mathsf{sub}})$, client metadata $C_{\mathsf{meta}}$, and scaling parameter $\tau$.}
\KwOut{Aggregated encrypted retrieval results.}

\smallskip
\tcp{Client-Side Query Decomposition}
$\mathbf{v}_q \gets \mathsf{OpenAI\_Emb}(q)$\;
Compute similarity scores between $\mathbf{v}_q$ and cluster centroids using $C_{\mathsf{meta}}$\;
Select top-$C_{\mathsf{probe}}$ clusters and denote them as $\mathcal{J}_{\mathsf{best}}$\;

\smallskip
\tcp{Encrypted Sub-query Generation}
$\mathit{EncSubQueries} \gets \emptyset$\;
\ForEach{$j^\ast \in \mathcal{J}_{\mathsf{best}}$}{
  Retrieve semantic label from $C_{\mathsf{meta}}[j^\ast]$\;
  Generate $k_{\mathsf{sub}}$ sub-queries via $\mathsf{LLM.Decompose}$\;
  \ForEach{$q_{\mathsf{sub}}$ in generated sub-queries}{
    $\mathbf{v}_{\mathsf{sub}} \gets \mathsf{OpenAI\_Emb}(q_{\mathsf{sub}})$\;
    $\ct_{\mathsf{sub}} \gets \mathsf{Enc}(\mathit{pk}, \mathbf{v}_{\mathsf{sub}})$\;
    Append $(\ct_{\mathsf{sub}}, j^\ast)$ to $\mathit{EncSubQueries}$\;
  }
}
Send $\mathit{EncSubQueries}$ and $K$ to the server\;

\smallskip
\tcp{Server-Side Targeted Encrypted Search}
$\mathit{AggregatedResults} \gets \emptyset$\;
\ForEach{$(\ct_{\mathsf{sub}}, j^\ast) \in \mathit{EncSubQueries}$}{
  Retrieve encrypted HNSW graph $G_{j^\ast}$\;
  Perform greedy encrypted ANN search on $G_{j^\ast}$ with $\ct_{\mathsf{sub}}$ and obtain top-$k$ identifiers\;
  Append $(j^\ast, \mathit{PartialResults})$ to $\mathit{AggregatedResults}$\;
}
\Return $\mathit{AggregatedResults}$\;
\end{algorithm}

To improve recall, the client performs query fusion within each selected cluster. For each cluster $j^*$, it retrieves the semantic label from $C_{\text{meta}}$ and uses a local LLM via $\algofont{LLM.Decompose}$ to generate $k_{\mathsf{sub}}$ diverse sub-queries tailored to the cluster topic and original query. Each sub-query is embedded by OpenAI\_Emb, encrypted into $\ct_{sub}$, and paired with its target cluster ID. The encrypted sub-query set is sent to the server with the desired top-$k$ parameter.

On the server side, retrieval is targeted and efficient. For each encrypted sub-query $\ct_{sub}$ and designated cluster $j^*$, the server retrieves the corresponding encrypted HNSW graph $G_{j^*}$ from $\mathcal{I}_{\text{HNSW}}$. It then runs greedy search from entry point $\textit{ep}$ to find top-$k$ nearest neighbors in the encrypted domain using homomorphic distance computations. Partial results containing candidate identifiers and scores are aggregated and returned to the client. Restricting search to selected clusters and parallelizing sub-queries reduces server overhead while preserving homomorphic security. The client then fuses and reranks aggregated results in Appendix~\ref{protocol:LocalRAG} to produce final context.

To reduce access-pattern leakage during retrieval, PRAG-\uppercase\expandafter{\romannumeral1} further augments the above search process with randomized protection mechanisms. First, for each real encrypted sub-query, the client injects a $\rho$-fraction of dummy traversal requests that target randomly selected clusters and trigger random walks of matched depth in the corresponding encrypted HNSW graphs. In addition, each real traversal is padded to a fixed observable length $\ell_{\max}$ by appending random neighbor visits, so the server observes a mixture of real and dummy paths with aligned lengths rather than a clean trace of the true greedy search. Second, the encrypted index is periodically re-encrypted and re-clustered after every epoch of $E$ queries, using fresh ciphertext randomness, fresh K-means initialization, and freshly rebuilt HNSW layers. This prevents the server from accumulating stable long-term traversal statistics over a fixed graph. Together, these protections are part of the operational design of PRAG-\uppercase\expandafter{\romannumeral1}, while their formal leakage model and security proof are deferred to Section~\ref{sec:ap_mitigation}.

\begin{algorithm}[t]
\SetAlgoLined
\SetKwInOut{KwIn}{KwIn}
\SetKwInOut{KwOut}{KwOut}
\footnotesize
\caption{Protocol $\Pi_{\mathsf{SecureUpdate}}$}
\label{protocol:secure_update}
\KwIn{Encrypted HNSW graph $G$, new encrypted node $(\ID_{\mathsf{new}}, \ct_{\mathsf{new}})$, scaling parameter $\tau$, search depth $\text{Dep}_h$, and maximum degree $M$.}
\KwOut{Updated encrypted HNSW graph $G$.}

\smallskip
\tcp{Procedure SecureInsert}
$L \gets \mathsf{RandomLayer}()$, $\mathit{ep} \gets G.\mathsf{entryPoint}$, $\mathit{topLayer} \gets G.\mathsf{maxLayer}$\;
\For{$l = \mathit{topLayer}, \dots, L+1$}{
  $\mathcal{N} \gets \mathsf{Neighbors}(G,\mathit{ep},l)$\;
  Compute $\ct_{d_k} \gets \mathsf{HomoIP}(\ct_{\mathsf{new}}, \mathcal{N}[k])$\;
  Compute $\ct_{w_k} \gets \mathsf{ChebyApprox}(\{\ct_{d_k}\}, \tau)$\;
  $\mathit{ep} \gets \sum_k \ct_{w_k} \cdot \mathcal{N}[k]$\;
}
\For{$l = \min(L,\mathit{topLayer}), \dots, 0$}{
  $\mathcal{N} \gets \mathsf{GreedySearch}(G,\ct_{\mathsf{new}},\mathit{ep},\text{Dep}_h,l)$\;
  $\mathcal{W} \gets \mathsf{ChebyApprox}(\mathcal{N},\tau)$\;
  $\mathcal{N}' \gets \mathsf{WeightedTopM}(\mathcal{N},\mathcal{W},M)$\;
  $G.\mathsf{link}(\ct_{\mathsf{new}},\mathcal{N}',l)$\;
  $\mathit{ep} \gets \sum \mathcal{W} \cdot \mathcal{N}'$\;
}
Output updated graph $G$\;

\smallskip
\tcp{Procedure SecureDelete}
$\ct_{\mathsf{node}} \gets \mathsf{FindNodeByID}(\mathcal{I}, \ID_{\mathsf{del}})$\;
\If{$\ct_{\mathsf{node}}$ is found}{
  Mark node as deleted using a secure flag; keep it for routing\;
}
\Return deletion status\;
\end{algorithm}


\par \textbf{Update Stage: Secure Data Insertion and Deletion.} As shown in Protocol~\ref{protocol:secure_update}, this protocol supports dynamic encrypted-index updates for evolving datasets. The $\algofont{SecureInsert}$ procedure adapts standard HNSW insertion to homomorphic encryption. For a new encrypted node $(\ID_{\text{new}}, \ct_{\text{new}})$, a random maximum layer $L$ is assigned following HNSW probabilistic layering. From the top layer down to $L+1$, the protocol performs a preliminary search: from entry point $\textit{ep}$, it retrieves candidate neighbors and computes encrypted distances $\ct_{d_k}$ via homomorphic inner products. Because discrete selection is incompatible with HE, $\algofont{ChebyApprox}$ assigns Chebyshev-derived weights $\ct_{w_k}$ based on distances, with $\tau$ controlling the approximation scale. The next entry point is computed as a weighted sum of candidates, yielding a continuous approximation for descent.

From layer $L$ downward to the base layer (0), a more thorough greedy search is conducted with depth controlled by a depth heuristic, denoted as $\text{Dep}_h$ to find potential neighbors. Again, Chebyshev-derived weights are applied to these neighbors, and the top $M$ are selected using a weighted ranking $\text{WeightedTopM}$ to establish bidirectional links in the graph. The entry point for the next layer is updated as a weighted aggregate. 

The $\algofont{SecureDelete}$ procedure addresses deletion without compromising security. Rather than physically removing a node, which could disrupt graph connectivity and potentially leak information through observable changes in access patterns, the protocol locates the encrypted node by its ID and sets a deleted flag. This "mark-and-sweep" strategy allows the node to remain as a routing point, ensuring ongoing search efficiency. 

\subsection{\texorpdfstring{PRAG-\uppercase\expandafter{\romannumeral2}}{PRAG-II}}
\label{subsec:client_assisted}

To reduce approximation errors from the Chebyshev polynomial surrogate, we introduce an interactive variant of our PRAG. In this mode, the cloud computes all distance-related operations homomorphically, while the client partially decrypts intermediate encrypted distances, such as cluster distances or HNSW layer candidates, to decide the next navigation step. This design preserves the hierarchical greedy structure of HNSW and achieves retrieval accuracy closely approximating plaintext search, at the cost of multiple client-server communication rounds. The preprocessing and final RAG response phases are the same as those in the PRAG-\uppercase\expandafter{\romannumeral1}, so they will not be described again in this subsection.

\par \textbf{Interactive Setup.} This procedure builds the encrypted two-level index $\mathcal{I}$ on an untrusted server under an interactive client-assisted protocol, sharing the same overall structure as the PRAG-\uppercase\expandafter{\romannumeral1} index construction protocol $\Pi_{\mathsf{Setup}}$ but replacing Chebyshev polynomial approximations with client-guided exact decisions. The client-side encryption and upload of vectors $\mathcal{V}$ to produce ciphertexts $\mathcal{CT}_{\mathcal{V}}$ remain identical. 

On the server side, during each K-means iteration, the process diverges: instead of using $\algofont{ChebyApprox}$ for weighted assignments, the server computes exact encrypted distances from each $\ct_i$ to all centroids using $\algofont{HomoDist}$,  which contextually utilizes $\algofont{HomoIP}$ to generate encrypted similarity scores for precise client-side comparison, and sends these ciphertexts to the client. The client decrypts them, identifies the nearest cluster $j^*$ in plaintext, and sends back only the cluster index. The server then aggregates the assigned ciphertexts homomorphically and updates centroids via $\algofont{HomoNorm}$, which essentially performs a homomorphic division $ct_{sum}/ct_{count}$ using a polynomial approximation of the reciprocal function $x \mapsto 1/x$ to compute the new centroid of the aggregated vectors, mirroring the aggregation step in the PRAG-\uppercase\expandafter{\romannumeral1} but with hard assignments. 

After clustering, the construction of encrypted HNSW graphs per cluster follows the interactive insertion procedure described below, rather than the PRAG-\uppercase\expandafter{\romannumeral1}. Finally, the encrypted centroids are sent to the client for decryption into plaintext metadata $C_{\text{meta}}$, as in the PRAG-\uppercase\expandafter{\romannumeral1}. This achieves identical clustering results to plaintext K-means without approximations, while preserving privacy, at the expense of $O(NCT)$ communication rounds during indexing.

\begin{table*}[!htbp]
\centering
\caption{Comparative noise analysis of operations in encrypted vector similarity search}
\label{tab:noise_comparison}
\begin{tabular}{lccc}
\toprule
\textbf{Operation} & \textbf{Inner Product Count} & \textbf{Order of Magnitude} & \textbf{Noise Characteristics} \\
\midrule
Single inner product & $1$ & $O(1)$ & Baseline unit ($\mathcal{S}_{\text{IP}}$) \\
Single query & $C + C_{\text{probe}} \cdot L \cdot \log(M)$ & $O(10^2)$ & Independent computations \\
Single HNSW insertion & $L \cdot \text{Dep}_h$ & $O(10^2)$ & Independent comparisons \\
K-means (one iteration) & $N \cdot C$ & $O(10^6) - O(10^7)$ & \textbf{Dominant noise source} \\
Full index construction & $T \cdot \vert \mathcal{D} \vert \cdot C + \vert \mathcal{D} \vert \cdot L \cdot \text{Dep}_h$ & $O(10^7) - O(10^8)$ & \textbf{Critical bottleneck} \\
\bottomrule
\end{tabular}
\end{table*}

\par \textbf{Interactive Retrieval.} This phase extends the PRAG-\uppercase\expandafter{\romannumeral1} retrieval protocol $\Pi_{\mathsf{SecureRetrieval}}$ into an interactive protocol that removes Chebyshev polynomial approximations in favor of exact client-side ranking decisions. The client-side query preparation, including embedding $q$ into $\vect{v}_q$ and encrypting it to $\ct_q$, is unchanged. 

However, cluster routing differs: the server computes encrypted distances from $\ct_q$ to all centroids and sends them to the client, who decrypts and selects the top-$C_{\text{probe}}$ clusters in plaintext, returning their indices. This replaces homomorphic Chebyshev-weighted routing with exact client-side selection. For each selected cluster, HNSW traversal is guided interactively: at each layer, the server computes encrypted distances from $\ct_q$ to candidate nodes and sends them to the client, who decrypts, selects the nearest entry point, or at lower layers selects the top-$M$ neighbors, and returns the choices. The final layer-0 greedy search on the server produces exact top-$k$ candidates, matching plaintext HNSW behavior.

\par \textbf{Interactive Update.} To support dynamic updates, this extends the PRAG-\uppercase\expandafter{\romannumeral1} update protocol $\Pi_{\mathsf{SecureUpdate}}$ to an interactive form, replacing Chebyshev weighting with client-assisted exact selections. During insertion, layer assignment and entry point initialization remain the same. 

At each layer, the server computes encrypted distances from the new $\ct_{\text{new}}$ to candidates using $\algofont{HomoDist}$ and sends them to the client, who decrypts and selects the nearest neighbor as the next entry point or selects the top-$M$ closest nodes for linking, returning the encrypted indices. This ensures exact HNSW topology without approximations. Deletion uses the same non-interactive mark-and-sweep strategy: flagging the node as deleted while keeping it for routing, with optional offline reconstruction. Overall, this preserves the graph's structural integrity under encryption, with client-guided decisions adding communication but guaranteeing accuracy equivalent to plaintext operations.

\subsection{Operation-Error Estimation (OEE)}
\label{sec:correctness_model}
A key challenge in deploying CKKS-based retrieval systems is managing computational noise and approximation errors. As established in our contribution summary, retrieval quality hinges on preserving the relative order of similarity scores rather than achieving pointwise numeric accuracy. To formally guarantee this property, we abstract the OEE mechanism introduced earlier into a \emph{Ranking-Preserving Correctness Model}. This model provides a theoretical framework that quantifies how homomorphic noise propagates through similarity computations, and derives explicit bounds to ensure that ranking inversions occur with negligible probability under properly tuned CKKS parameters.

To make this dependency explicit, PRAG decomposes the encrypted score of candidate $d_i$ as
\[
\hat{s}_i = s_i + \xi_i^{\mathrm{poly}} + \xi_i^{\mathrm{he}} + \xi_i^{\mathrm{path}},
\]
where $s_i=\langle q,d_i\rangle$ is the true similarity, $\xi_i^{\mathrm{poly}}$ is the approximation error introduced by Chebyshev-based comparison surrogates, $\xi_i^{\mathrm{he}}$ captures CKKS arithmetic noise from inner products, rescaling, and rotations, and $\xi_i^{\mathrm{path}}$ captures path-dependent accumulation caused by different traversal depths or update paths. In PRAG-\uppercase\expandafter{\romannumeral1}, all three terms may appear; in PRAG-\uppercase\expandafter{\romannumeral2}, client-assisted comparisons eliminate the dominant comparison-approximation term for interactive decisions, leaving CKKS arithmetic noise and path-dependent accumulation as the main sources.

OEE shifts the correctness target from minimizing average numerical error to preserving the margin between competing candidates. Let $q$ be a query and $d_i, d_j$ two documents with true similarity gap $\delta = |\langle q, d_i \rangle - \langle q, d_j \rangle|$. Even if each encrypted score has error less than $\delta/2$, asymmetric noise or approximation can still flip their order. Such ranking errors break top-$k$ retrieval and harm RAG output, but are invisible to standard metrics such as mean square error (MSE). 

Ranking is preserved if, whenever the true similarity gap exceeds a margin $\Delta$ (where $\Delta$ depends on homomorphic noise and approximation error), the encrypted retrieval maintains the correct order with overwhelming probability. Concretely, for documents $d_i$ and $d_j$, ranking is stable whenever:
\[
\left|(\hat{s}_i-\hat{s}_j)-(s_i-s_j)\right| \leq \epsilon_{\mathrm{OEE}} < \Delta/2.
\]
By choosing parameters so that the total perturbation is less than $\Delta/2$, PRAG ensures that ranking errors occur only when documents are semantically indistinguishable, meaning any ambiguity reflects true embedding similarity rather than encryption-induced leakage.

OEE also reveals that noise is unevenly distributed across phases. Table~\ref{tab:noise_comparison} shows that online retrieval usually requires only hundreds of independent inner-product evaluations, whereas encrypted K-means and full index construction require millions to tens of millions. Thus, the dominant risk is not one query overflowing the noise budget; it is that offline construction determines the minimum CKKS parameter budget needed for the whole system.

This phase-aware view directly guides mitigation. At the cryptographic level, we choose a sufficiently large polynomial modulus degree $N$, an adequate ciphertext moduli chain $\{q_i\}$, and a calibrated scaling factor $\Delta_{\mathsf{scale}}$ so that the full construction circuit fits within the available noise budget. At the algorithmic level, we limit K-means to a small number of iterations and adopt conservative HNSW construction parameters such as a smaller $\text{Dep}_h$. These choices reduce multiplicative depth and operation count in the dominant offline phase while keeping retrieval quality acceptable. Additional derivations are provided in Appendix~\ref{sec:noise_analysis}.

\section{Theoretical Analysis}
We first present a formal security analysis under the honest-but-curious adversary model. Building on CKKS IND-CPA security and game-based proofs, we establish view indistinguishability and analyze data privacy, query privacy, and result privacy, including ranking confidentiality, for both operating modes, PRAG-\uppercase\expandafter{\romannumeral1} and PRAG-\uppercase\expandafter{\romannumeral2}. We then compare PRAG's theoretical complexity with representative SSR schemes in Table~\ref{tab:complexity_comparison}.

\subsection{Security Analysis}
\subsubsection{Security Definitions and Leakage Definitions}

We formalize PRAG's security along two dimensions. The first is \emph{data-level security}, which protects ciphertext contents. The second is \emph{access-pattern-level security}, which bounds the information leaked through observable traversal patterns. Let $\mathcal{A}$ denote the honest-but-curious cloud server adversary with polynomial-time capabilities.

\begin{definition}[IND-CPA Security of CKKS]
The CKKS scheme is IND-CPA secure if for any polynomial-time adversary $\mathcal{A}$, the advantage $\Adv_{\mathcal{A}}^{\text{IND-CPA}} = \left| \Pr[\mathcal{A}(\Enc(\vect{m}_0)) = 0] - \Pr[\mathcal{A}(\Enc(\vect{m}_1)) = 1] \right|$ is negligible in the security parameter $\lambda$, where $\vect{m}_0, \vect{m}_1$ are chosen by $\mathcal{A}$.
\end{definition}
This provides the foundational guarantee that all encrypted embeddings, queries, similarity scores, and rankings are semantically hidden.

\begin{definition}[Leakage Function]
The permitted leakage $\mathcal{L}$ characterizes the access-pattern information observable by the server. However, PRAG introduces \emph{random access perturbation} to suppress the fidelity of this leakage. Formally, let $\mathcal{L}_{\text{true}}$ denote the true access pattern and $\mathcal{L}_{\text{obs}}$ denote the observed perturbed access pattern. PRAG ensures that $\mathcal{L}_{\text{obs}}$ is a noisy version of $\mathcal{L}_{\text{true}}$. The perturbation is parameterized by the obfuscation rate $\rho$, which denotes the fraction of dummy accesses, and by the re-encryption epoch $E$:
\begin{equation}
\mathcal{L} = \mathcal{L}_{\text{obs}} = \mathsf{Perturb}(\mathcal{L}_{\text{true}};\, \rho, E).
\end{equation}
The perturbed leakage includes: the noisy sequence of accessed nodes in encrypted HNSW graphs and the perturbed set of probed clusters, mixed with dummy accesses.
Notably, $\mathcal{L}$ does not include plaintext values, similarity scores, or exact rankings. We do not explicitly include update-stage leakage in $\mathcal{L}$, as index updates are treated as offline operations with observable effects limited to local index connectivity.
\end{definition}

\begin{definition}[Semantic Confidentiality under Perturbed Leakage]
PRAG achieves semantic confidentiality under perturbed leakage if: (1)~\emph{Data-level}: $\mathcal{A}$ cannot distinguish plaintext embeddings, queries, or result rankings with non-negligible advantage; (2)~\emph{Access-pattern-level}: the server observes only the perturbed leakage $\mathcal{L}$ induced by random access perturbation, and any graph reconstruction attempt is therefore limited to what can be inferred from this noisy view.
Formally, for any two datasets $\mathcal{V}_0, \mathcal{V}_1$ and queries $q_0, q_1$ of equal size, the views $\View_{\mathcal{A}}(\mathcal{V}_0, q_0)$ and $\View_{\mathcal{A}}(\mathcal{V}_1, q_1)$ are computationally indistinguishable, conditioned on equivalent perturbed leakage $\mathcal{L}(\mathcal{V}_0, q_0) = \mathcal{L}(\mathcal{V}_1, q_1)$.
\end{definition}

This two-tier definition captures four properties: data privacy through IND-CPA indistinguishability of ciphertexts, query privacy through indistinguishability of query intents, result privacy through indistinguishability of scores and rankings, and access-pattern mitigation through perturbed leakage. 

To formalize how OEE masks fine-grained ranking information, let $s_i = \langle q, d_i \rangle$ be true cosine similarities and let the server observe approximations $s_i' = s_i + \xi_i$, where $|\xi_i| \leq \epsilon_{\mathrm{OEE}}$. Define the ranking ambiguity margin $\Delta := 2\epsilon_{\mathrm{OEE}}$.

\begin{lemma}[Ranking Indistinguishability under OEE]
\label{lem:oee-ranking}
For any two documents $d_i, d_j$, if $|s_i - s_j| < \Delta$, then the cloud server cannot determine their relative order with non-negligible advantage.
\end{lemma}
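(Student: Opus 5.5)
The plan is to prove the lemma by a two-layer reduction that matches the two tiers of the security definition. The first layer is cryptographic: every score $s_i'$ the server ever holds is encrypted. I would first show that the server's view contains only CKKS ciphertexts of the scores, together with the perturbed leakage $\mathcal{L}$, and that $\mathcal{L}$ by definition excludes plaintext values, similarity scores and exact rankings. I would then set up a hybrid experiment. An adversary $\mathcal{B}$ that outputs the relative order of $d_i,d_j$ with advantage $\varepsilon$ is used to build an IND-CPA adversary against CKKS. The messages are chosen as $\vect{m}_0=(s_i',s_j')$ and $\vect{m}_1=(s_j',s_i')$, i.e., the score slots are swapped, and every other ciphertext in $\mathsf{View}_{\mathcal{A}}$ is simulated by encryptions of zero. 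By the IND-CPA definition above, the advantage of distinguishing the two orderings is at most $\Adv^{\text{IND-CPA}}+\negl(\lambda)$, which is negligible. This already gives the lemma for any pair, and in particular for the pairs with $|s_i-s_j|<\Delta$.

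The second layer explains why the hypothesis $|s_i-s_j|<\Delta$ is needed. The concern is information that escapes through the leakage channel: the access pattern and the client's decrypted decisions in PRAG-\uppercase\expandafter{\romannumeral2}, which are functions of the noisy scores $s_i'$ rather than of $s_i$. I would show that for such a pair both orders are consistent with the observable noisy scores. Write $s_i'=s_i+\xi_i$ and $s_j'=s_j+\xi_j$ with $|\xi_i|,|\xi_j|\le\epsilon_{\mathrm{OEE}}$. If $|s_i-s_j|<2\epsilon_{\mathrm{OEE}}$, one can construct an alternative pair of true scores with the opposite order, namely $\tilde s_i=s_j$ and $\tilde s_j=s_i$, with admissible errors $\tilde\xi$ satisfying $|\tilde\xi|\le\epsilon_{\mathrm{OEE}}$, that reproduces exactly the same $(s_i',s_j')$. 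Any observation that depends only on the noisy scores therefore has the same distribution under both orderings. The remaining step is to apply the equal-leakage clause of the semantic-confidentiality definition: conditioned on $\mathcal{L}(\mathcal{V}_0,q)=\mathcal{L}(\mathcal{V}_1,q)$, the views are computationally indistinguishable, so the guess of $\mathcal{B}$ is correct with probability at most $1/2+\negl(\lambda)$.

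The main obstacle is this second step. It requires that the noise be distributed so that the swapped configuration is genuinely plausible, i.e., that the distributions of $\xi$ overlap enough to make the two orderings statistically close rather than merely possible. The deterministic bound $|\xi|\le\epsilon_{\mathrm{OEE}}$ alone does not guarantee this. My first attempt would be to assume the perturbation is symmetric, or is smoothed by the random access perturbation $\mathsf{Perturb}(\cdot;\rho,E)$, and to argue that the leakage is identically distributed. If that argument fails, I would fall back on the weaker but sound statement that ranking information is protected by IND-CPA plus the explicit exclusion of rankings from $\mathcal{L}$, and use the margin $\Delta$ only to explain why order-revealing traversal decisions carry no reliable ordering signal.
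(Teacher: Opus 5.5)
\textbf{There is a genuine gap. The paper's proof is essentially your second layer, and it does not close the step you flag either.} The paper writes $s_i'-s_j'=(s_i-s_j)+(\xi_i-\xi_j)$ and uses the triangle inequality to get $|\xi_i-\xi_j|\le 2\epsilon_{\mathrm{OEE}}=\Delta$. It then simply \emph{asserts} that $\Pr[s_i'>s_j']$ is negligibly close to $\Pr[s_j'>s_i']$, and concludes that the order looks like a random permutation; no IND-CPA argument appears. You are right that this does not follow from a deterministic bound. Your proposal does not close the step, and two of its components fail as written.

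Layer~1 is circular. The IND-CPA reduction must also produce the non-ciphertext part of the view: the routing and top-$M$ choices, and in PRAG-II the indices the client returns. These are functions of the order of $s_i',s_j'$. Since the reduction does not know the challenge bit, it can simulate them only if they are order-independent, which is exactly the claim to be proved. The symptoms are that the hypothesis $|s_i-s_j|<\Delta$ is never used, and that a conclusion for every pair would contradict Theorem~2, whose leakage explicitly includes revealed local orderings. Invoking the equal-leakage clause of the semantic-confidentiality definition is circular in the same way: Theorem~1 is proved using this lemma, and equality of leakage under the two orderings is precisely what is at issue.

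The swap construction in layer~2 is not admissible. Here $\tilde\xi_i=s_i'-s_j=(s_i-s_j)+\xi_i$, which can approach $3\epsilon_{\mathrm{OEE}}$. More fundamentally, both orders are consistent with $(s_i',s_j')$ only when the \emph{realized} gap satisfies $|s_i'-s_j'|<2\epsilon_{\mathrm{OEE}}$. For example, $s_i-s_j=\epsilon_{\mathrm{OEE}}$, $\xi_i=\epsilon_{\mathrm{OEE}}$, $\xi_j=-\epsilon_{\mathrm{OEE}}$ give $s_i'-s_j'=3\epsilon_{\mathrm{OEE}}$, which determines the order.

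\textbf{The obstacle is not a technicality: under the stated hypotheses the lemma is false.} So neither of your routes can succeed without a new assumption.
\begin{itemize}
\item The choice $\xi\equiv 0$ satisfies $|\xi_i|\le\epsilon_{\mathrm{OEE}}$ and lets any order-dependent observation reveal the order with certainty.
\item Symmetry does not rescue it. Suppose $\xi_i-\xi_j$ has a symmetric continuous distribution independent of the scores, and the server sees only which of $s_i',s_j'$ is larger, which is exactly what a routing decision exposes. Guessing that order succeeds with probability $\tfrac12+\tfrac12\Pr\bigl[|\xi_i-\xi_j|<|s_i-s_j|\bigr]$, and this is optimal for that observation.
\item For CKKS-type noise, which concentrates well inside its worst-case bound, this probability is close to $1$ once the gap is a constant fraction of $\Delta$.
\item Mixing in dummy paths at a constant rate $\rho$ dilutes this signal but cannot make it negligible.
\end{itemize}
Negligible advantage would require $\xi_i-\xi_j$ to put negligible mass on $(-|s_i-s_j|,|s_i-s_j|)$, for instance flooding noise superpolynomially larger than the gap. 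The bound $|\xi|\le\epsilon_{\mathrm{OEE}}$ rules this out except for negligible gaps or pathological noise concentrated at the endpoints of its range.

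Your fallback does not prove the lemma either. Declaring rankings excluded from $\mathcal{L}$ does not make them unrecoverable from the paths that $\mathcal{L}$ does contain. What can honestly be proved is one of two things: the ciphertext-only statement, in which the margin plays no role, or the quantitative gap-dependent bound above, which the OEE margin does not make negligible.
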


The lemma states that OEE converts sufficiently small score gaps into ranking ambiguity, so close candidates do not reveal stable semantic orderings to the server. The full proof is deferred to Appendix~\ref{app:security_proofs}.

\subsubsection{\texorpdfstring{Security of PRAG-\uppercase\expandafter{\romannumeral1} and PRAG-\uppercase\expandafter{\romannumeral2}}{Security of PRAG-I and PRAG-II}}
In PRAG-\uppercase\expandafter{\romannumeral1}, all operations are performed homomorphically using Chebyshev polynomial approximations to replace comparisons.

\begin{table*}[!htbp]
\centering
\caption{Complexity of Secure Similarity Search Schemes}
\label{tab:complexity_comparison}

\begingroup
\setlength{\tabcolsep}{3pt}
\begin{tabular}{@{}>{\raggedright\arraybackslash}p{0.16\textwidth}
                >{\centering\arraybackslash}p{0.19\textwidth}
                >{\centering\arraybackslash}p{0.19\textwidth}
                >{\centering\arraybackslash}p{0.18\textwidth}
                >{\centering\arraybackslash}p{0.19\textwidth}@{}}
\toprule
\textbf{Scheme} & \textbf{Setup} & \textbf{Retrieval} & \textbf{Update} & \textbf{Communication} \\
\midrule
SESR~\cite{wang2024secure} & $O(|\mathcal{D}| \cdot d \log |\mathcal{D}|)$ & $O(d \log |\mathcal{D}|)$ & NA & $O(d \log |\mathcal{D}|)$ \\
PDQ~\cite{tan2021efficient} & $O(|\mathcal{D}| \cdot d)$ & $O(|\mathcal{D}| \cdot d)$ & NA & $O(|\mathcal{D}| \cdot d)$ \\
MSecKNN~\cite{Li2025MSecKNNMS} & $O(|\mathcal{D}| \cdot d)$ & $O(|\mathcal{D}|(d + \log^2 |\mathcal{D}|))$ & $O(d)$ & $O(|\mathcal{D}|(d + \log^2 |\mathcal{D}|))$ \\
FSkNN~\cite{Fukuchi2025SecureKF} & $O(|\mathcal{D}| \cdot d)$ & $O(|\mathcal{D}|(d + k))$ & $O(d)$ & $O(|\mathcal{D}| \cdot k)$ \\

\midrule
\textbf{PRAG} & $O(|\mathcal{D}| \cdot d \log |\mathcal{D}|)$ & $O(d)$ & $O(d)$ & $O(d)$ \\
\bottomrule
\end{tabular}
\endgroup

\medskip
\noindent\parbox{\textwidth}{%
\small
\textbf{Notations:} $|\mathcal{D}|$: dataset size; $d$: vector dimension; $k$: top-$k$ retrieval parameter (FSkNN). }
\end{table*}
\begin{theorem}
Assuming CKKS is IND-CPA secure and the OEE is bounded such that ranking distortion is limited by an ambiguity margin $\Delta$, PRAG-\uppercase\expandafter{\romannumeral1} achieves semantic confidentiality with leakage $\mathcal{L}$.
\end{theorem}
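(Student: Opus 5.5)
The proof is a standard simulation-based hybrid argument against Definition~\ref{def:semantic_confidentiality}. We build a PPT simulator $\mathcal{S}$ that receives only $\mathcal{L}=\mathsf{Perturb}(\mathcal{L}_{\text{true}};\rho,E)$, the public parameters $\mathsf{Params}$ (including $\lambda,N,\tau$, the HNSW parameters $L,M,\text{Dep}_h$, and the sizes $|\mathcal{D}|$, $Q$, $C$, $C_{\mathsf{probe}}$, $k_{\mathsf{sub}}$), and $1^\lambda$. It then outputs a view with the same shape as $\mathsf{View}_{\mathcal{A}}$.

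\textbf{The simulator.} $\mathcal{S}$ runs $\mathsf{KeyGen}(\lambda)$ itself. It publishes $\mathit{pk},\mathit{evk}$ but never uses $\mathit{sk}$. It then produces:
\begin{itemize}
\item $|\mathcal{D}|$ encryptions of the all-zero vector in place of $\{\ct_{\mathbf{v}}^{(i)}\}$;
\item $Q$ encryptions of zero in place of the sub-query ciphertexts;
\item the server-side computation, replayed exactly along the node and cluster sequence dictated by $\mathcal{L}$.
\end{itemize}
The last step works because PRAG-I is non-interactive after setup. The server's control flow (which clusters are probed, which HNSW nodes are touched, the padded path length $\ell_{\max}$, the dummy requests, the epoch re-encryptions) is fully determined by the observable leakage $\mathcal{L}$. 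Every intermediate ciphertext in $\mathcal{CT}_{\text{inter}}$ (outputs of $\mathsf{HomoIP}$, $\mathsf{ChebyApprox}$, $\mathsf{HomoNorm}$, weighted sums, $\mathsf{WeightedTopM}$ outputs) is a deterministic public function of input ciphertexts and $\mathit{evk}$. So $\mathcal{S}$ can evaluate these same circuits on its zero ciphertexts. Crucially, PRAG-I never branches on a decrypted value: selection is replaced by Chebyshev weights. Hence the circuit topology depends on public data and $\mathcal{L}$ only, not on plaintexts.

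\textbf{Hybrids.} $H_0$ is the real view. In $H_1$ we fix the access trace to the one recorded in $\mathcal{L}$; this is identical by definition of the leakage, since the perturbed trace is exactly what the server sees. In the hybrids $H_{1,t}$, $t=1,\dots,|\mathcal{D}|+Q$, we replace the $t$-th fresh client ciphertext (document vector, centroid initialisation, or sub-query) by an encryption of zero. We then recompute all downstream intermediate ciphertexts homomorphically. Each consecutive pair $H_{1,t-1},H_{1,t}$ is distinguishable only by a reduction $\mathcal{B}$ that breaks IND-CPA of CKKS. $\mathcal{B}$ embeds its challenge at position $t$, encrypts the other inputs itself under the challenge $\mathit{pk}$, and runs the public evaluation circuits; this needs only $\mathit{evk}$, which is assumed part of the public key material (circular-security assumption on relinearization and Galois keys, standard for CKKS). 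A union bound gives total advantage at most $(|\mathcal{D}|+Q+\text{poly})\cdot\negl(\lambda)$. Periodic re-encryption each epoch simply adds polynomially many more fresh ciphertexts handled the same way. The final hybrid equals $\mathcal{S}$'s output.

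\textbf{Role of OEE and the main obstacle.} Score and ranking confidentiality follow because scores only ever exist as ciphertexts. The OEE assumption and Lemma~\ref{lem:oee-ranking} are used to argue that the returned encrypted top-$k$ identifiers and the ordering induced inside the Chebyshev weights carry no more than $\mathcal{L}$ reveals. Near-ties with $|s_i-s_j|<\Delta$ are order-ambiguous, so no stable semantic ordering is recoverable even if result positions correlate with traces.

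I expect the main obstacle to be justifying that the server's access trace depends on plaintexts only through $\mathcal{L}$. In $\Pi_{\mathsf{SecureRetrieval}}$ the phrase ``greedy encrypted ANN search'' suggests data-dependent branching. I would resolve this by formally defining $\mathcal{L}_{\text{true}}$ to include the full (padded, dummy-mixed) visited node sequence. I would also note that under CKKS a correctly behaving server cannot itself evaluate comparisons, so the path is either client-specified (cluster IDs in $\mathit{EncSubQueries}$) or fixed-length by padding to $\ell_{\max}$. A secondary subtlety is CKKS's known IND-CPA$^{\mathrm{D}}$ weakness, where decryption outputs leak noise. It is harmless here because PRAG-I never returns decryptions to the server.
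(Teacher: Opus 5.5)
Your argument is sound in structure, but it handles access patterns by a genuinely different route from the paper. The paper's proof has three games. Game~1 replaces encryptions by encryptions of random vectors. Game~2 simulates the homomorphic intermediates. In Game~3, $\Sim_{\mathcal{L}}$ \emph{generates} cluster probes and graph traversals from index size and structure alone, and Lemma~\ref{lem:oee-ranking} is used to claim these are statistically close to the real paths, on the grounds that Chebyshev-driven routing behaves like randomized choice.

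You instead give the simulator the perturbed trace itself; the $\mathcal{L}_{\text{query}}$ of Appendix~\ref{leakage} does contain $\widetilde{\mathcal{J}}_{\text{probe}}$ and $\widetilde{\mathsf{Path}}$. You also fold Games~1--2 into per-ciphertext IND-CPA hybrids in which the reduction evaluates the public circuit itself. This makes explicit the circular-security assumption on $\mathit{evk}$ that the paper's Game~2 silently relies on.

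Your route is the standard SSE-style simulation, and it avoids the paper's weakest step. The lemma only concerns pairs with $|s_i-s_j|<\Delta$, so it cannot make routing among well-separated candidates look random. The price is that the OEE hypothesis becomes essentially idle. The theorem reduces to ``everything outside $\mathcal{L}$ is ciphertext,'' and all access-pattern protection is pushed into what $\mathcal{L}$ contains, i.e.\ into the LAA-resilience theorem.

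Two points need repair.

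First, $H_0\to H_1$ is ``identical by definition'' only if the trace is a function of the plaintexts and the client's perturbation coins alone. In PRAG-I the probed cluster IDs $j^\ast$ are computed by the client from its \emph{decryption} of the server-produced centroid ciphertexts ($C_{\mathsf{meta}}$). So decryption-derived information does reach the server, contrary to your IND-CPA$^{\mathrm{D}}$ remark. If CKKS noise can flip such a selection, the trace is correlated with the randomness of the ciphertexts you later swap, and your reduction $\mathcal{B}$, which knows the plaintexts but not that noise, cannot reproduce it. You must either show the noise never changes the client's choice, or state this as an explicit assumption. The first option is exactly where an OEE-type margin would do real work, though it fails for near-ties.

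Second, putting the visited node sequence into $\mathcal{L}_{\text{true}}$ covers queries but not setup-phase structure. An example is the links chosen by $\mathsf{WeightedTopM}$ during Setup, which $\mathcal{L}_{\text{setup}}$ does not list. Your own observation that a non-interactive server cannot evaluate comparisons gives the cleaner fix. Whatever the server derives from ciphertexts alone is reproduced by running the honest server code on the zero ciphertexts. This is jointly indistinguishable by IND-CPA, since it is a PPT function of the ciphertexts. Then only client-supplied, plaintext-dependent choices need to come from $\mathcal{L}$: the probed clusters and the real/dummy mix.
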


\begin{proof}[Proof sketch]
The proof uses a standard hybrid sequence. We first replace real encryptions with encryptions of random vectors and then simulate homomorphic additions and multiplications on these random ciphertexts; IND-CPA security of CKKS makes both transitions negligible. We then use Lemma~\ref{lem:oee-ranking} to show that Chebyshev-driven routing decisions reveal only ambiguity-preserving order information when score gaps fall inside the OEE margin. Finally, a simulator that depends only on perturbed leakage $\mathcal{L}$ reproduces the observed paths and intermediates, so the adversary's final view is independent of the underlying plaintexts up to negligible advantage. A complete proof is given in Appendix~\ref{app:security_proofs}.
\end{proof}

PRAG-\uppercase\expandafter{\romannumeral2} uses client-assisted decryption to resolve comparisons, introducing bounded additional leakage of local orderings.

\begin{theorem}
Assuming CKKS is IND-CPA secure, PRAG-\uppercase\expandafter{\romannumeral2} achieves semantic confidentiality with leakage $\mathcal{L}$ extended to include revealed orderings of interacted candidates.
\end{theorem}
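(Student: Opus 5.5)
The plan is a simulation-based hybrid argument that mirrors the proof sketch for PRAG-\uppercase\expandafter{\romannumeral1}. The difference is that the interactive decisions are now exact rather than Chebyshev-approximated. First I fix the extended leakage $\mathcal{L}' = (\mathcal{L}, \mathcal{L}_{\mathsf{ord}})$. Here $\mathcal{L}_{\mathsf{ord}}$ collects, for every interaction round, the indices the client returns to the server. These are the nearest-cluster index $j^\ast$ in each K-means assignment, the top-$C_{\mathsf{probe}}$ cluster indices at routing time, the chosen entry point at each HNSW layer, and the top-$M$ neighbor sets selected during retrieval and insertion. The key observation, which I will establish first, is that in the semi-honest model the server's view of each round consists of two parts. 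The first is ciphertexts it computed itself from ciphertexts it already holds. The second is a plaintext message from the client that is a deterministic function of decrypted scores, and that message is exactly an entry of $\mathcal{L}_{\mathsf{ord}}$. Decrypted values themselves never leave the client, so the only new information per round is the ordering decision.

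Next I construct the simulator $\mathcal{S}(\mathcal{L}'(\mathcal{D},\{\mathbf{q}^{(i)}\}),1^\lambda)$. It generates $(\mathit{pk},\mathit{evk})$ and encrypts all-zero (or random) vectors in place of every document embedding, sub-query embedding, and centroid. It then replays the server's homomorphic evaluations on these dummy ciphertexts along the access paths dictated by $\mathcal{L}$. Wherever the real protocol would receive a client reply, it injects the corresponding index from $\mathcal{L}_{\mathsf{ord}}$. Because every server computation is a public function of ciphertexts, $\mathit{evk}$, and the received indices, the simulated intermediates $\mathcal{CT}_{\text{inter}}$ are produced exactly as in the real protocol, only on different plaintexts. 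The dummy traversals and padding to $\ell_{\max}$ are sampled by the simulator with the same public randomness parameters $(\rho,E)$, as in the PRAG-\uppercase\expandafter{\romannumeral1} proof.

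The hybrids then proceed in order. $H_0$ is the real view. $H_1$ replaces the plaintext client replies with the values taken from $\mathcal{L}_{\mathsf{ord}}$; this is an identical distribution by definition of the leakage. $H_2,\dots$ replace the fresh encryptions one at a time (database vectors, then queries, then re-encryption epochs) with encryptions of zero. Each step reduces to CKKS IND-CPA via a standard reduction that embeds the challenge ciphertext and evaluates the rest using $\mathit{pk},\mathit{evk}$. A hybrid over polynomially many ciphertexts gives a negligible total advantage, and the final hybrid equals the simulator's output.

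I expect the main obstacle to be the reduction step in the presence of client decryptions. The client's replies depend on the true plaintexts, so an IND-CPA reduction cannot compute them itself. This is precisely why the replies must be moved into the leakage before the encryption swaps, i.e. $H_1$ must come before the IND-CPA hybrids, and the reduction must receive $\mathcal{L}_{\mathsf{ord}}$ as auxiliary input. A related subtlety is that CKKS is not IND-CPA$^{\mathrm{D}}$ secure when approximate decryptions are revealed. I will handle this by noting that the client outputs only discrete indices, never decrypted values. For robustness, I would additionally argue via the OEE margin (Lemma~\ref{lem:oee-ranking}) that whenever two scores lie within $\Delta$ the reported index is a function of the leaked ordering alone, so no further noise-dependent information escapes.
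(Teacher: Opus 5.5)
Your architecture is sound, and on the point that matters most for this theorem it is more careful than the paper's own argument. One step needs repair.

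The paper first swaps all encryptions for encryptions of random vectors (its Game~1, by IND-CPA). Only in its final game does it replace the client's replies by uniformly random orderings over candidate sets of the right sizes, arguing that the candidates are random after Game~1. You instead put the client's actual replies into $\mathcal{L}_{\mathsf{ord}}$, hold them fixed while the ciphertexts are swapped, and let the simulator use the true orderings.

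That is exactly what \emph{leakage extended to include revealed orderings} calls for, and it avoids the paper's conflation. Swapping ciphertexts does not make the replies random, because they are determined by the real data (e.g.\ the K-means replies fix the cluster sizes). So in general they cannot be simulated from candidate-set sizes alone. The paper's route would yield a stronger conclusion if it went through; yours proves precisely the stated theorem.

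A minor point: $\mathcal{L}$ is already the perturbed pattern. Your simulator should therefore replay the mixed real/dummy paths from it rather than sample dummies afresh, and replies on dummy rounds belong in $\mathcal{L}_{\mathsf{ord}}$ too.

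The step to repair is how $\mathcal{L}_{\mathsf{ord}}$ is handled inside the IND-CPA reductions.
\begin{itemize}
\item A reduction knows the plaintexts, so replies determined by the plaintexts are no obstacle, contrary to your diagnosis.
\item The real obstacle is that you define $\mathcal{L}_{\mathsf{ord}}$ as the indices the client actually returns, computed from decrypted scores $s_i+\xi_i$. On near-ties, the argmin or top-$M$ choice is then decided by the CKKS error of the very ciphertexts being swapped.
\item Such a reply is not a function of $(\mathcal{D},\{\mathbf{q}^{(i)}\})$, as the leakage in Definition~\ref{def:semantic_confidentiality} must be. The reduction can neither compute it nor be handed it before seeing its challenge.
\end{itemize}

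Your appeal to Lemma~\ref{lem:oee-ranking} points the wrong way. The bound $|\xi_i-\xi_j|\le\Delta$ makes a reply noise-independent only when the true gap exceeds $\Delta$. Inside $\Delta$ the reply is precisely a bit of information about the error. Releasing only discrete indices does not by itself exclude IND-CPA$^{\mathrm{D}}$-type leakage.

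Either of two fixes works:
\begin{itemize}
\item Define $\mathcal{L}_{\mathsf{ord}}$ from exact plaintext scores with a fixed tie-break, and add a hybrid replacing the noisy replies by these. Its cost is the probability that some decision flips. That is negligible under a margin assumption that every compared gap exceeds $\Delta$.
\item Have the client add fresh noise, large relative to the CKKS error, before comparing. Each reply is then statistically a randomized function of the plaintexts, which the reduction can sample itself.
\end{itemize}
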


\begin{proof}[Proof sketch]
The hybrid structure is the same as in Theorem~1 for all non-interactive ciphertext operations. The difference is that the simulator must additionally reproduce the bounded local orderings revealed by client-assisted comparisons, such as selected clusters or candidate neighbors at a given HNSW layer. Because the server observes only the choices and not the decrypted values themselves, these interactions can be simulated as orderings over candidate sets whose sizes match the protocol, yielding only localized leakage and no exposure of global plaintext rankings. Thus, PRAG-\uppercase\expandafter{\romannumeral2} preserves data and query privacy while incurring only the explicitly modeled ordering leakage. The full proof appears in Appendix~\ref{app:security_proofs}.
\end{proof}

\subsubsection{Proof of Access-Pattern Mitigation}\label{sec:ap_mitigation}\label{sec:ap_proof}
While Theorems~1 and~2 establish data-level semantic confidentiality, the server can still observe retrieval paths over encrypted HNSW graphs. In PRAG-\uppercase\expandafter{\romannumeral1}, this leakage is operationally mitigated by query obfuscation with dummy traversals and by periodic re-encryption/re-clustering across epochs, as described in Section~\ref{subsec:chebyshev_based}. We now formalize the security of this mitigation and bound the adversary's graph-reconstruction advantage.

We now formally prove that the combination of query obfuscation and periodic re-encryption effectively defends against Leakage-Abuse Attacks (LAAs).

\begin{theorem}[LAA Resilience]
Let $\mathcal{A}$ be a PPT adversary that observes $Q_E$ queries within a single re-encryption epoch of length $E$. Under PRAG's access-pattern mitigation with obfuscation rate $\rho$ and epoch length $E$, the adversary's advantage in reconstructing the HNSW graph topology satisfies:
\begin{equation}
\Adv_{\mathcal{A}}^{\text{graph-recon}} \leq \frac{Q_E}{(1+\rho) \cdot |\mathcal{D}|} + \negl(\lambda).
\end{equation}
\end{theorem}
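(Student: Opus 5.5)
The bound has two parts, and I would prove it with a hybrid argument that separates them. The additive $\negl(\lambda)$ term comes from data-level security. The main term $Q_E/((1+\rho)|\mathcal{D}|)$ comes from a counting argument over the perturbed access pattern inside a single epoch.

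\textbf{Step 1: reduce to a ciphertext-independent game.} I would start from the real graph-reconstruction game. Using the hybrid sequence of Theorem~1, every ciphertext the server sees is replaced with an encryption of a random vector. This covers the stored embeddings, the encrypted sub-queries, and all intermediate scores and Chebyshev weights. By IND-CPA security of CKKS, and Lemma~\ref{lem:oee-ranking} for the routing decisions, the adversary's success probability changes by at most $\negl(\lambda)$.

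In the resulting hybrid, the adversary's only useful information is the perturbed leakage $\mathcal{L}=\mathsf{Perturb}(\mathcal{L}_{\text{true}};\rho,E)$. Any reconstruction strategy is then a function of $\mathcal{L}$ plus the public parameters.

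\textbf{Step 2: restrict attention to one epoch.} At each epoch boundary the index is re-encrypted with fresh randomness and re-clustered with a fresh K-means initialization, and the HNSW layers are rebuilt. So the node handles and edges of different epochs are related only through ciphertexts, which Step~1 already made independent of the plaintext.

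I would formalize this by showing the view across epochs factorizes into independent per-epoch views. After that it suffices to bound the advantage using the $Q_E$ queries of a single epoch.

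\textbf{Step 3: count how much true topology a query reveals.} I would define reconstruction advantage per edge or node: the probability, averaged over the $|\mathcal{D}|$ nodes, that the adversary correctly identifies a node's true neighborhood, beyond what random guessing achieves.

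Each real query's traversal is padded to length $\ell_{\max}$ and accompanied by $\rho$ dummy random walks of matched depth. The adversary therefore observes $1+\rho$ identically shaped traces, exactly one of which is real. I would argue by a symmetry argument that, conditioned on the observed multiset of paths, the real trace is uniform among the $1+\rho$ candidates, so a real trace is identified with probability at most $1/(1+\rho)$.

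A correctly identified real path touches a bounded region of the graph. Normalized against $|\mathcal{D}|$ nodes, it contributes at most $1/|\mathcal{D}|$ to the averaged advantage. A union bound over the $Q_E$ queries of the epoch then gives $Q_E/((1+\rho)|\mathcal{D}|)$.

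\textbf{Main obstacle.} The difficult step is the symmetry claim in Step~3. Dummy random walks are not distributed like greedy searches, and padding with random neighbor visits does not by itself make a real greedy prefix indistinguishable from a random walk on the same graph.

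My first attempt would be to model the mitigation idealistically. In that model, dummy and real traces are sampled so that their observable node-visit distributions coincide; this is justified because the server sees only opaque ciphertext handles and cannot evaluate greediness without scores. I would state that indistinguishability as an explicit assumption on $\mathsf{Perturb}$ and fold any residual distinguishing gap into the negligible term.

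I would also need to fix the normalization of ``advantage'' carefully, so that each query is charged at most one node's worth of topology, that is, $1/|\mathcal{D}|$.
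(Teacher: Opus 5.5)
Your outline has the same skeleton as the paper's proof: a CKKS hybrid, independent epochs (the paper's Game~2), and a per-edge count yielding $Q_E/((1+\rho)|\mathcal{D}|)$. Where you diverge, you are more careful. The paper's Game~1 derives real/dummy indistinguishability from IND-CPA, but IND-CPA hides ciphertext contents, not which node handles the server reads, so isolating it as an assumption on $\mathsf{Perturb}$ is right. Two side remarks: a residual gap between greedy and random-walk trace distributions depends on the graph and the query distribution, not on $\lambda$, so it cannot be folded into $\negl(\lambda)$; and Step~2 is unnecessary, since the hypothesis already confines $\mathcal{A}$ to one epoch.

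The genuine gap is in Step~3, and it survives your assumption. The adversary never needs to decide which trace is real: it can collect the adjacency exposed by every trace, i.e., each expanded node together with the neighbor list read to score it. Real traces expose only genuine edges of $G_j$. Under your idealization the dummies must do the same, because a node expanded by a dummy has to show the same neighbor list it shows when a real query expands it, or the two traces are separable. The paper's mechanism (random walks in $G_j$, padding by random neighbor visits) has this property literally. So the collected edge set contains no false positives and only grows with $\rho$: whatever $\mathcal{A}$ achieves at $\rho=0$ it still achieves for every $\rho>0$. Symmetry yields the factor $1/(1+\rho)$ for the question of \emph{which trace is real}, not for \emph{how much topology is exposed}, so the $(1+\rho)$ in the bound cannot come from this argument.

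The per-query charge of $1/|\mathcal{D}|$ is likewise stipulated rather than derived. Under your own averaged-neighborhood metric, a query issues $C_{\mathsf{probe}}k_{\mathsf{sub}}$ traversals, each expanding up to $\ell_{\max}$ nodes, so it can contribute up to $C_{\mathsf{probe}}k_{\mathsf{sub}}\ell_{\max}/|\mathcal{D}|$. Moreover, every query probing $G_j$ starts at its fixed entry point and reads that node's neighbor list, so those edges are observed with constant probability per query, not $1/|\mathcal{D}|$. The paper asserts exactly these two steps without justification: that an edge is detectable only on a real non-dummy path, and that a random query traverses a given edge with probability at most $1/|\mathcal{D}|$. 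Following it will therefore not close the gap. Reaching the stated bound requires a precise definition of $\Adv_{\mathcal{A}}^{\text{graph-recon}}$ that accounts for false positives. It also requires a model of $\mathsf{Perturb}$ in which dummy traces carry no true adjacency while remaining indistinguishable from real ones, for example walks over a consistent decoy graph on fake nodes. That is strictly stronger than the assumption you propose, and the paper's mechanism does not provide it.
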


\begin{proof}[Proof sketch]
The analysis again proceeds by hybrids. First, dummy traversals are computationally indistinguishable from real traversals because both execute the same encrypted operations over ciphertext nodes; thus the adversary cannot reliably separate them. Second, periodic re-encryption and re-clustering make different epochs statistically independent, so graph evidence cannot be accumulated indefinitely across epochs. Finally, recovering an edge requires observing it inside a real traversal, whose expected count is diluted by both the dataset size and the obfuscation factor, yielding the bound $\frac{Q_E}{(1+\rho)|\mathcal{D}|}+\negl(\lambda)$. The detailed proof is deferred to Appendix~\ref{app:security_proofs}.
\end{proof}

\begin{corollary}[Epoch Length Selection]
To achieve graph reconstruction advantage at most $\alpha$ for a dataset of size $|\mathcal{D}|$, the epoch length should satisfy $E \leq \alpha \cdot (1+\rho) \cdot |\mathcal{D}|$ queries. For the 100,000-sample setting used in Section~\ref{subsec:laa_eval}, taking $\rho=0.3$ and $\alpha=10^{-2}$ gives $E \leq 1{,}300$ queries per epoch, placing the experimentally evaluated $E=1{,}000$ setting within the theoretical bound.
\end{corollary}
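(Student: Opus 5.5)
The corollary follows directly from the LAA Resilience theorem, so the plan is to invert its bound and then check the numerical instance. Within a single re-encryption epoch of length $E$, the adversary observes at most $E$ queries, so $Q_E \le E$. Substituting into the theorem gives
\[
\Adv_{\mathcal{A}}^{\text{graph-recon}} \le \frac{E}{(1+\rho)\,|\mathcal{D}|} + \negl(\lambda).
\]
It therefore suffices to choose $E$ so that the non-negligible term is at most $\alpha$, that is, $E/((1+\rho)|\mathcal{D}|) \le \alpha$. Since $(1+\rho)|\mathcal{D}| > 0$, this rearranges to $E \le \alpha(1+\rho)|\mathcal{D}|$. The resulting advantage is then at most $\alpha + \negl(\lambda)$. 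For any fixed non-negligible target $\alpha$, this is at most $\alpha$ up to the negligible slack.

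Two points need care. First, the bound must hold for every adversary observing queries inside one epoch. Here I would lean on the theorem's hybrid argument: periodic re-encryption and re-clustering make epochs statistically independent, so observations from earlier epochs give no additional advantage. The per-epoch count $Q_E \le E$ is then the only relevant quantity, and monotonicity of the bound in $Q_E$ lets me replace $Q_E$ by $E$. Second, the $\negl(\lambda)$ term should be stated explicitly as absorbed: either interpret "at most $\alpha$" as "at most $\alpha + \negl(\lambda)$", or tighten the requirement by an arbitrarily small constant.

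Finally, I would verify the concrete instance: $|\mathcal{D}| = 10^5$, $\rho = 0.3$, $\alpha = 10^{-2}$. This gives $\alpha(1+\rho)|\mathcal{D}| = 10^{-2}\cdot 1.3\cdot 10^5 = 1{,}300$. The evaluated $E = 1{,}000$ satisfies $1{,}000 \le 1{,}300$, and yields an advantage bound of about $1000/130000 \approx 7.7\times 10^{-3} < \alpha$.

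The only real obstacle is the negligible-term bookkeeping and the justification that $Q_E \le E$ under the epoch model. The algebra itself is immediate.
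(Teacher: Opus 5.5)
Your proposal is correct and follows the argument the paper intends. The paper states the corollary as an immediate rearrangement of the LAA Resilience bound with $Q_E \le E$, followed by the arithmetic $10^{-2}\cdot 1.3\cdot 10^{5} = 1{,}300 \ge 1{,}000$. Your explicit handling of the $\negl(\lambda)$ slack, giving advantage at most $\alpha + \negl(\lambda)$ rather than exactly $\alpha$, is slightly more careful than the paper, which leaves that term implicit.
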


\begin{remark}[Security Interpretation]
Combining Theorems~1--3 with the empirical results in Section~\ref{subsec:laa_eval}, we conclude that PRAG achieves semantic confidentiality while making access-pattern-based graph reconstruction attacks difficult both theoretically and empirically.
\end{remark}

\subsection{Complexity Analysis}
In this subsection, we first analyze the computational complexity and then investigate the communication complexity.

\textbf{Setup.}
PRAG builds a two-level encrypted index with setup complexity
$O(|\mathcal{D}| \cdot d \log |\mathcal{D}|)$, covering encrypted clustering initialization and per-cluster HNSW construction. The $\log |\mathcal{D}|$ factor arises from HNSW's probabilistic layering, which requires each of the $|\mathcal{D}|$ vectors to be inserted across $O(\log |\mathcal{D}|)$ layers. As shown in Table~\ref{tab:complexity_comparison}, this is comparable to \text{SESR}, which also relies on hierarchical index construction. By contrast, \text{MSecKNN}, \text{FSkNN}, and \text{PDQ} have setup complexity $O(|\mathcal{D}| \cdot d)$, because they avoid hierarchical ANN index construction. PRAG therefore incurs higher upfront indexing cost but enables substantially more efficient sublinear retrieval.


\textbf{Retrieval.} As summarized in  Table~\ref{tab:complexity_comparison}, PRAG achieves sublinear retrieval complexity $O(d)$ by performing encrypted hierarchical navigation over cluster-level centroids followed by localized HNSW graph traversal. In contrast, SESR has complexity $O(d \log |\mathcal{D}|)$ with logarithmic dependency on the full dataset size and lacks support for encrypted similarity ranking within graph-based ANN structures. More importantly, PDQ, MSecKNN, and FSkNN remain effectively linear in the dataset size during query processing, with costs $O(|\mathcal{D}| \cdot d)$, $O(|\mathcal{D}|(d + \log^2 |\mathcal{D}|))$, and $O(|\mathcal{D}|(d + k))$, respectively. This reflects the fact that they evaluate secure distance computations over the outsourced dataset without a hierarchical ANN pruning mechanism, making them less suitable than PRAG for large-scale RAG retrieval.

\textbf{Update.} PRAG supports efficient incremental updates with complexity $O(d)$, independent of the dataset size $|\mathcal{D}|$, by enabling localized encrypted insertions into existing HNSW graphs. In contrast, SESR and PDQ do not support real-time updates in the compared settings. MSecKNN and FSkNN have record-level update costs of $O(d)$ because they do not maintain a hierarchical encrypted index, but this advantage comes with a structural trade-off: query processing continues to scale linearly with the outsourced dataset rather than benefiting from localized ANN navigation.

\textbf{Communication.} PRAG incurs communication overhead of only $O(d)$ per query, as it transmits a single encrypted query vector and returns encrypted similarity scores. SESR increases communication cost to $O(d \log |\mathcal{D}|)$ to support unlinkability through multi-path traversal. By comparison, PDQ requires $O(|\mathcal{D}| \cdot d)$ communication, MSecKNN requires $O(|\mathcal{D}|(d + \log^2 |\mathcal{D}|))$, and FSkNN requires $O(|\mathcal{D}| \cdot k)$, since these schemes exchange information proportional to the outsourced dataset during secure query evaluation. This makes PRAG considerably more communication-efficient for cloud-scale RAG workloads.
\section{Evaluations}
\label{sec:experiments}

\textbf{Experimental Setup.} We evaluate PRAG on a 100,000-sample subset of TriviaQA~\cite{joshi2017triviaqa}, which serves as the default dataset scale throughout this section. All experiments are conducted on an Ubuntu 24.04.2 LTS server equipped with 8 Intel Xeon Gold vCPUs, 200~GB RAM, and 500~GB SSD. For secure computation, we adopt the CKKS scheme with a 128-bit security level. Unless otherwise specified, retrieval and communication use a top-10 setting, and the graph-reconstruction study uses $Q=10{,}000$ queries.

\textbf{Implementation.} The PRAG prototype is a pure C++ implementation developed on top of the Microsoft SEAL 4.1 library. The source code are available at https://github.com/richikun2014-bit/PRAG. We utilize LangChain for initial document chunking and OpenAI\_Emb for embedding generation, with Qwen-3 serving as the response generation model\footnote{\url{https://huggingface.co/Qwen/Qwen3-32B-GGUF}}. A key technical feature of our implementation is the use of Chebyshev polynomials specifically for encrypted-domain numerical comparisons; this mechanism facilitates secure decision making during HNSW hierarchical navigation and routing path selection.


\subsection{System Performance}
\label{subsec:complexity_analysis}
We evaluate PRAG in both PRAG-\uppercase\expandafter{\romannumeral1} and PRAG-\uppercase\expandafter{\romannumeral2} configurations against baselines that satisfy currently acceptable security assumptions, focusing on retrieval efficiency, update cost, communication overhead, and resilience to graph reconstruction attacks. Unless otherwise specified, the retrieval and communication results reported below use a top-10 retrieval setting. Following this criterion, Huang et al.~\cite{huang2019toward} and SPE-Sim~\cite{zheng2020achieving} are excluded from experimental baselines due to known security weaknesses.

\textbf{Setup.} FSkNN at $25.902$~s and MSecKNN at $44.678$~s show the fastest setup, as illustrated in Fig.~\ref{fig:setup_retrieval}(a). This advantage mainly comes from their lightweight preprocessing pipelines, which avoid costly tree/graph index construction and rely on simpler encryption/partition routines. In contrast, PRAG-\uppercase\expandafter{\romannumeral2} at $289.2956$~s, PRAG-\uppercase\expandafter{\romannumeral1} at $398.0$~s, PDQ at $217.524$~s, and SESR at $986.385$~s incur higher upfront construction overhead due to more complex secure indexing procedures. Nevertheless, this additional setup cost in PRAG is amortized by substantially lower retrieval and communication overhead in the online phase, especially compared with heavier alternatives such as PDQ and SESR.

\textbf{Retrieval.}
PRAG-\uppercase\expandafter{\romannumeral1} achieves the best retrieval latency at $1.29$~s, followed by PRAG-\uppercase\expandafter{\romannumeral2} at $7.91$~s, as shown in Fig.~\ref{fig:setup_retrieval}(b). Both outperform SESR at $166.11$~s, PDQ at $4107.111$~s, MSecKNN at $2137.65$~s, and FSkNN at $1978$~s. PDQ is substantially slower because it executes a full comparison circuit entirely in the ciphertext domain. MSecKNN and FSkNN are also slower because secure ranking requires many cross-server interaction rounds. Overall, PRAG achieves the most favorable retrieval efficiency among the compared secure schemes.

\begin{figure}[!htbp]
  \centering
  \begin{minipage}[b]{0.48\columnwidth}
    \centering
    \includegraphics[width=\textwidth]{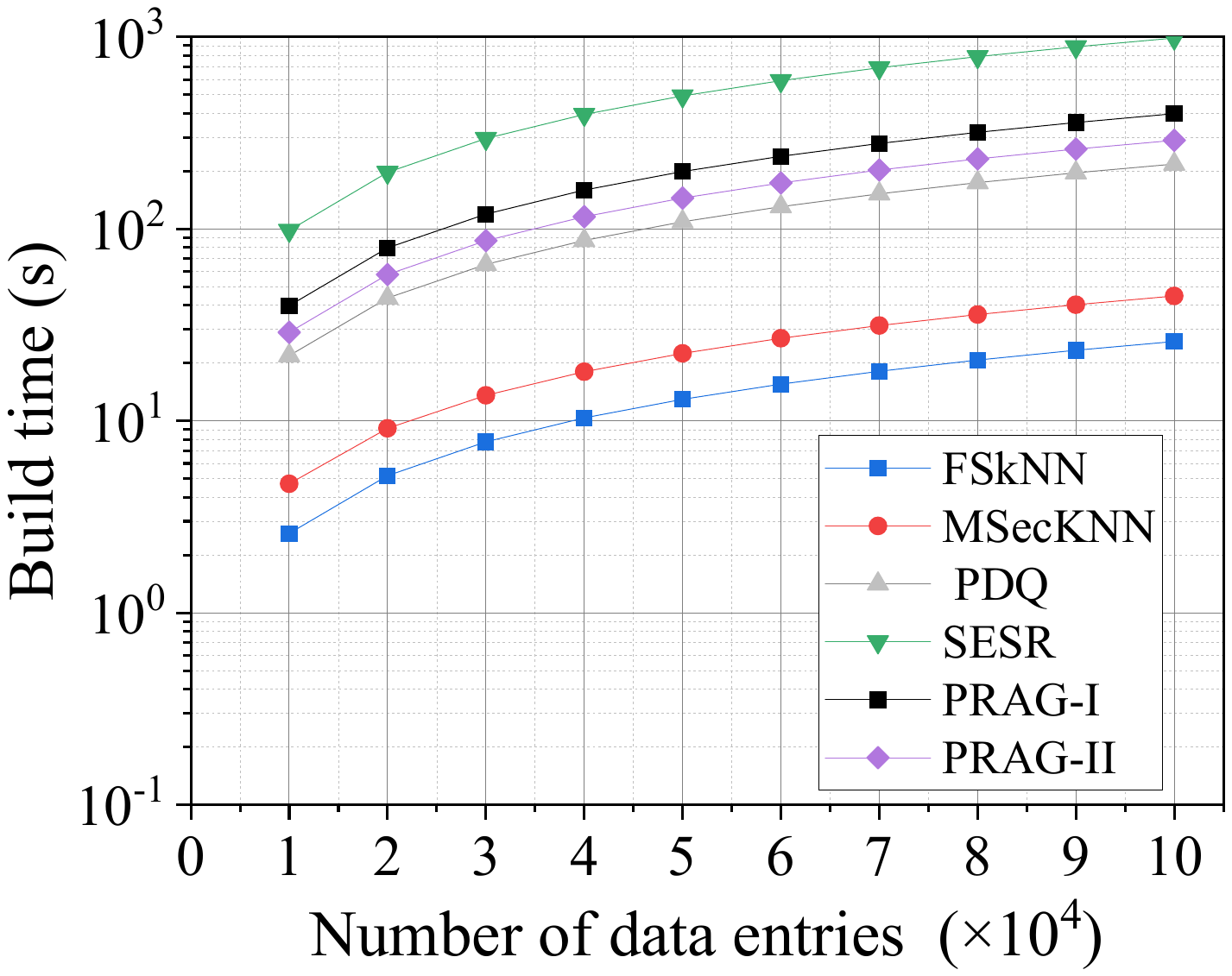}
    \centerline{\footnotesize (a) Setup time}
  \end{minipage}
  \hfil
  \begin{minipage}[b]{0.48\columnwidth}
    \centering
    \includegraphics[width=\textwidth]{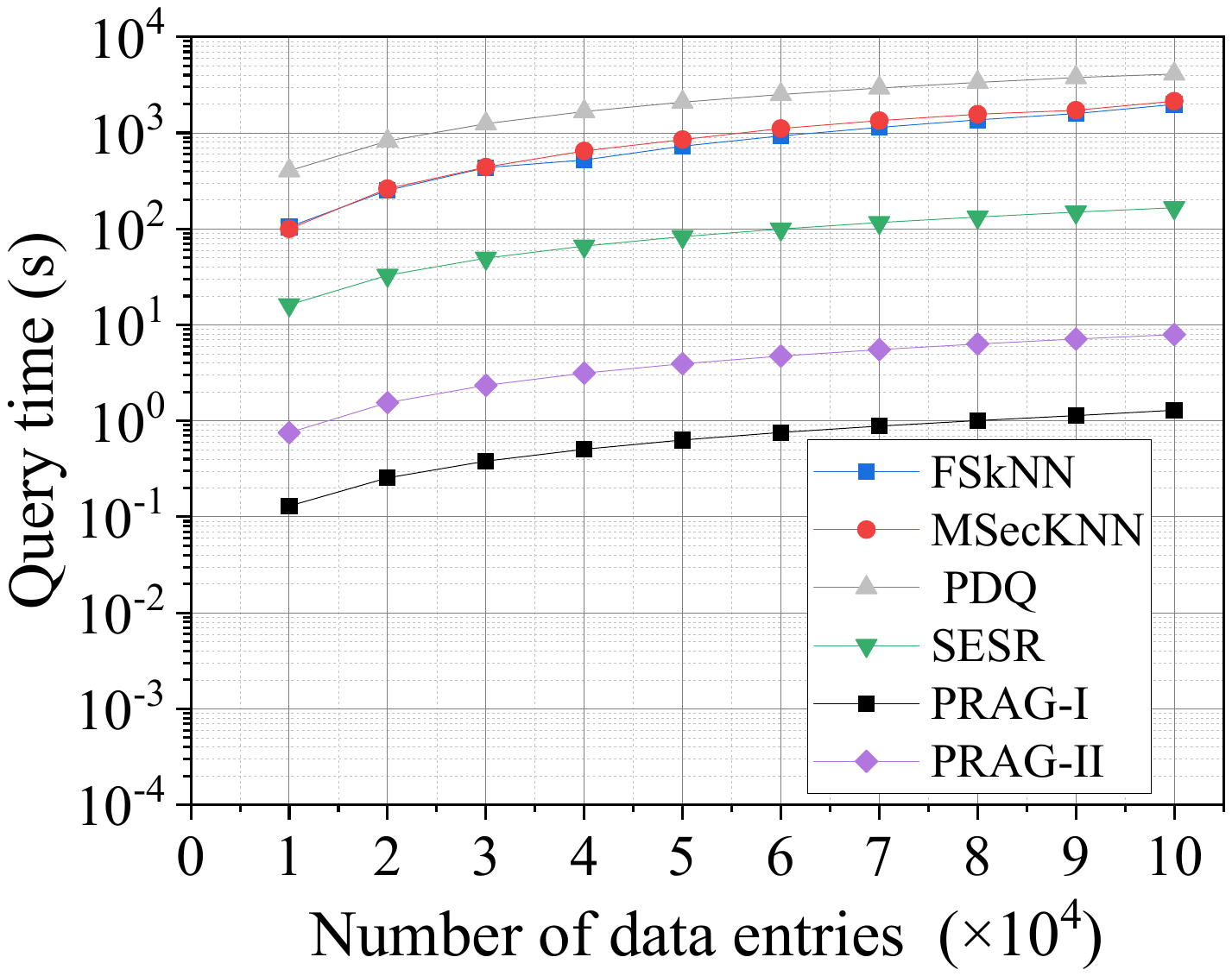}
    \centerline{\footnotesize (b) Retrieval time}
  \end{minipage}
  \caption{Setup time and retrieval time comparison across schemes.}
  \label{fig:setup_retrieval}
\end{figure}

\textbf{Update.}
PRAG-\uppercase\expandafter{\romannumeral2} achieves $5.44$~ms and PRAG-\uppercase\expandafter{\romannumeral1} achieves $7.28$~ms for updates as shown in Fig.~\ref{fig:update_commu}(a), while FSkNN and MSecKNN are about $7.42$~ms and $12.198$~ms. PDQ and SESR do not support updates. 

\begin{figure}[!htbp]
  \centering
  \begin{minipage}[b]{0.48\columnwidth}
    \centering
    \includegraphics[width=\textwidth]{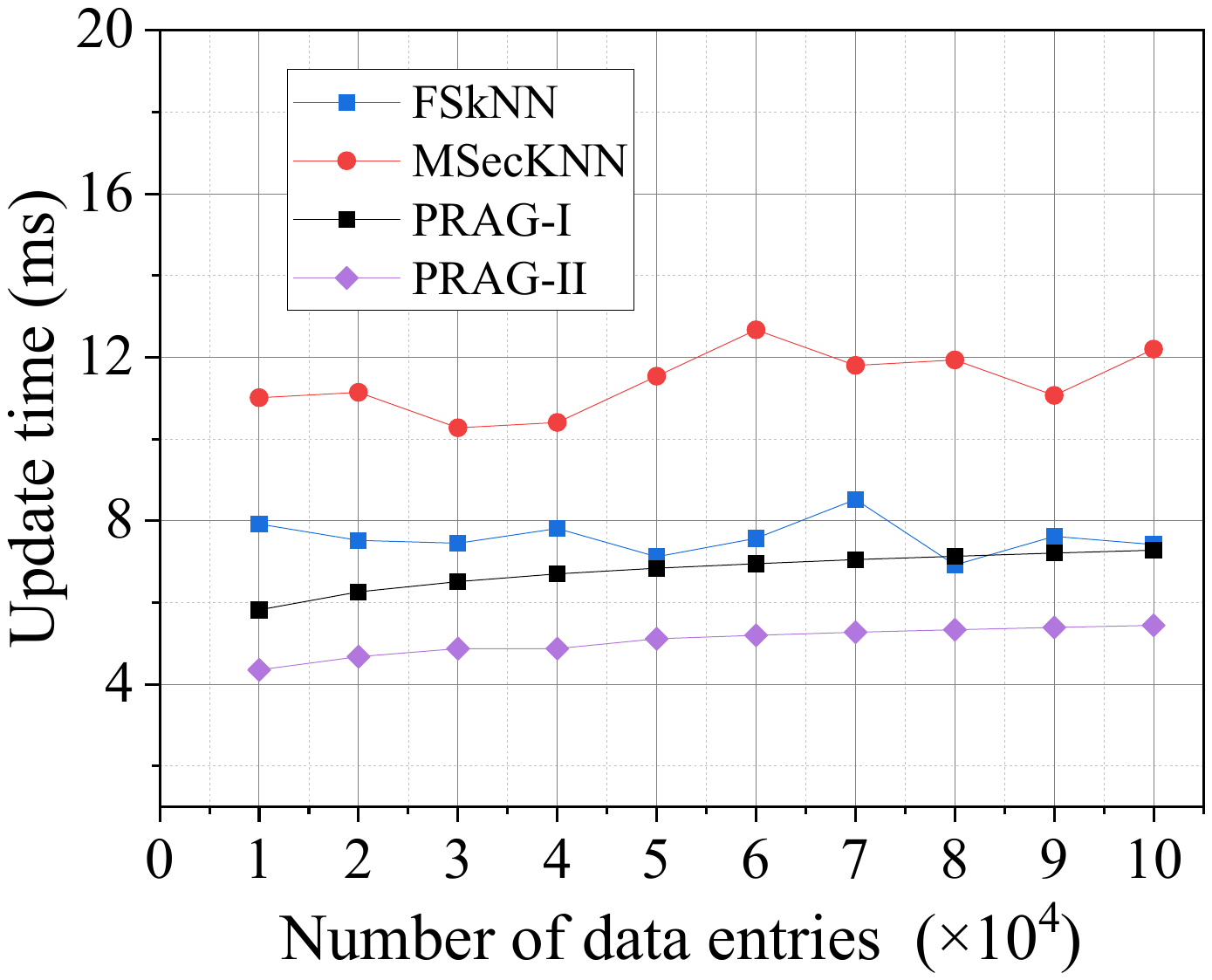}
    \centerline{\footnotesize (a) Update time}
  \end{minipage}
  \hfil
  \begin{minipage}[b]{0.48\columnwidth}
    \centering
    \includegraphics[width=\textwidth]{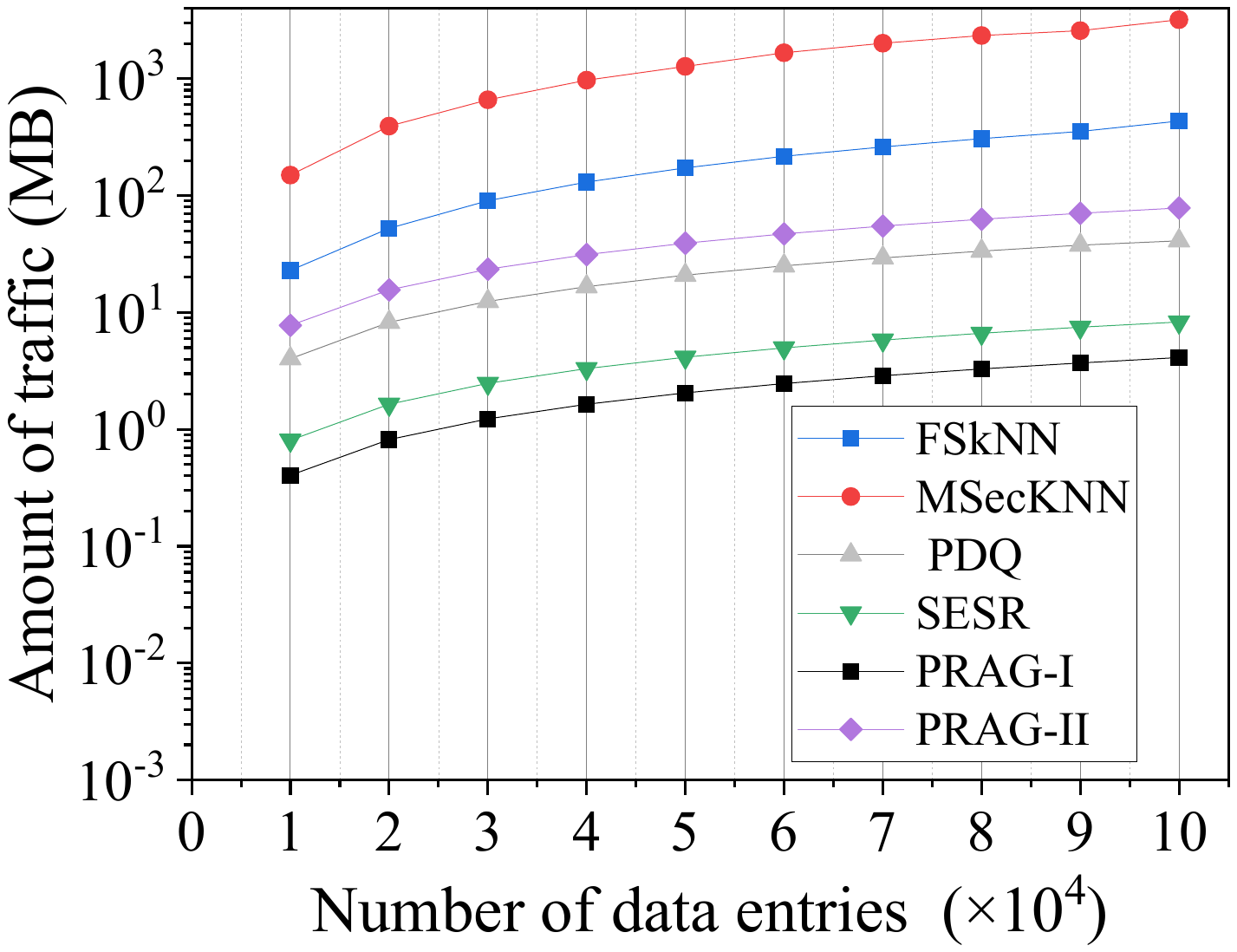}
    \centerline{\footnotesize (b) Communication cost}
  \end{minipage}
  \caption{Update time and communication cost comparison across schemes.}
  \label{fig:update_commu}
\end{figure}

\textbf{Communication.}
PRAG-\uppercase\expandafter{\romannumeral1} achieves the lowest communication cost at $4.1175$~MB, as shown in Fig.~\ref{fig:update_commu}(b). SESR and PDQ follow at $8.3055$~MB and $41.07111$~MB, while PRAG-\uppercase\expandafter{\romannumeral2} reaches $78.5478$~MB due to layer-wise client-assisted interactions during encrypted traversal. FSkNN and MSecKNN are much higher at $434.7$~MB and $3206.475$~MB, respectively, because their secure ranking procedures require substantially more cross-party message exchanges. Overall, PRAG-\uppercase\expandafter{\romannumeral1} provides the most communication-efficient retrieval path, and PRAG-\uppercase\expandafter{\romannumeral2} preserves stronger ranking fidelity with moderate but still practical bandwidth overhead.

PRAG offers the strongest end-to-end trade-off across setup, retrieval, update, and communication. Although its offline setup cost is higher than FSkNN and MSecKNN, it substantially reduces online retrieval latency and bandwidth, which dominate continuous RAG serving. Compared with PDQ and SESR, PRAG also avoids severe update bottlenecks. Within PRAG, PRAG-\uppercase\expandafter{\romannumeral1} is preferable for latency- and bandwidth-sensitive deployments, while PRAG-\uppercase\expandafter{\romannumeral2} is preferable when stronger ranking fidelity is required with still-practical communication overhead. This dual-mode design improves deployability over single-regime alternatives in cloud RAG workloads.

\subsection{Resilience Against Graph Reconstruction Attacks}\label{subsec:laa_eval}

We evaluate PRAG's access-pattern mitigation (Section~\ref{sec:ap_mitigation}) by simulating an IHOP-style leakage-abuse adversary~\cite{oya2022ihop,damie2021highly,kornaropoulos2021state} that reconstructs the encrypted HNSW graph from observed traversal paths. We report edge recovery rate, false positive rate, and Recall@10 degradation under obfuscation rates $\rho \in \{0,0.1,0.2,0.3,0.5\}$ and re-encryption epochs $E \in \{\infty,5000,2000,1000\}$.

\begin{table}[!t]
\centering
\caption{Adversary's Graph Reconstruction Success under PRAG's Access-Pattern Mitigation.}
\label{tab:laa_defense}
\small
\begin{tabular}{cccccc}
\toprule
\textbf{$\rho$} & \textbf{Epoch $E$} & \makecell{\textbf{Edge}\\\textbf{Recovery}} & \makecell{\textbf{False}\\\textbf{Positive}} & \makecell{\textbf{Recall}\\\textbf{@10}} & \makecell{\textbf{Latency}\\\textbf{Overhead}} \\
\midrule
0    & $\infty$  & 34.7\% & 8.2\%  & 72.45\% & 1.00$\times$ \\
0.1  & $\infty$  & 18.3\% & 21.6\% & 72.31\% & 1.10$\times$ \\
0.2  & $\infty$  & 11.5\% & 35.4\% & 72.18\% & 1.19$\times$ \\
0.3  & $\infty$  & 6.8\%  & 48.7\% & 71.92\% & 1.28$\times$ \\
0.5  & $\infty$  & 3.1\%  & 62.3\% & 71.54\% & 1.47$\times$ \\
\midrule
0    & 5000      & 19.2\% & 14.5\% & 72.45\% & 1.00$\times$ \\
0    & 2000      & 9.6\%  & 28.3\% & 72.45\% & 1.00$\times$ \\
0    & 1000      & 5.1\%  & 41.7\% & 72.45\% & 1.00$\times$ \\
\midrule
0.3  & 2000      & 2.4\%  & 71.8\% & 71.92\% & 1.28$\times$ \\
0.3  & 1000      & 1.1\%  & 83.5\% & 71.92\% & 1.28$\times$ \\
\bottomrule
\end{tabular}
\begin{tablenotes}
  \small
  \item[] Epoch $E=\infty$ denotes no periodic re-encryption.
\end{tablenotes}
\end{table}

Table~\ref{tab:laa_defense} shows that unmitigated access leakage is substantial: at $\rho=0$ and $E=\infty$, the adversary recovers 34.7\% of true edges. Either defense already helps: query obfuscation with $\rho=0.3$ lowers recovery to 6.8\% at a modest Recall@10 drop from 72.45\% to 71.54\% and 1.28$\times$ latency overhead, while periodic re-encryption with $E=1{,}000$ lowers recovery to 5.1\% without affecting retrieval quality. When both are enabled ($\rho=0.3$, $E=1{,}000$), recovery falls to 1.1\% and the false positive rate rises to 83.5\%, indicating that the inferred graph is dominated by spurious edges. Overall, PRAG's mitigation sharply limits graph reconstruction while preserving retrieval utility, consistent with Theorem~3.

\section{Conclusion and Future Work}
We present PRAG, a privacy-preserving RAG framework for untrusted clouds that combines dual-mode encrypted retrieval, OEE-based ranking stabilization, and access-pattern mitigation. Experiments and formal analysis show that PRAG achieves practical latency, efficient updates, competitive recall, and stronger resilience to graph reconstruction attacks while preserving end-to-end confidentiality. PRAG currently assumes a single-client key setting. Future work will extend it to multi-key secure computation and further reduce homomorphic overhead through algorithmic and hardware acceleration.

\normalem
\bibliographystyle{IEEEtran}

\bibliography{ref}


\appendices

\section{Leakage Function}\label{leakage}

\begin{definition}[Leakage Function (with Access-Pattern Perturbation)]
\label{def:leakage}
The leakage function $\mathcal{L} = (\mathcal{L}_{\text{setup}}, \mathcal{L}_{\text{query}})$ characterizes the \emph{perturbed} side-channel information observable by the adversary after applying PRAG's access-pattern mitigation (Section~\ref{sec:ap_mitigation}):

\textbf{Setup Phase:}
\begin{equation}
\mathcal{L}_{\text{setup}}(\mathcal{D}) = \left(|\mathcal{D}|, d, C, \left\{\mathcal{C}_j^{(t)}\right\}_{j=1,t=1}^{C,T}\right)
\end{equation}
where $|\mathcal{D}|$ is the dataset size, $d$ is the embedding dimension, $C$ is the number of clusters, and $\mathcal{C}_j^{(t)}$ captures cluster assignment patterns at iteration $t$. Note that setup leakage is invalidated after each re-encryption epoch.

\textbf{Query Phase (Perturbed):}
\begin{equation}
\mathcal{L}_{\text{query}}(\mathbf{q}^{(i)}, \mathcal{D}) = \left(\widetilde{\mathcal{J}}_{\text{probe}}^{(i)}, \left\{\widetilde{\mathsf{Path}}_j^{(i)}\right\}_{j \in \widetilde{\mathcal{J}}_{\text{probe}}^{(i)}}, K\right)
\end{equation}
where $\widetilde{\mathcal{J}}_{\text{probe}}^{(i)} = \mathcal{J}_{\text{probe}}^{(i)} \cup \mathcal{J}_{\text{dummy}}^{(i)}$ includes both real probed clusters and dummy cluster probes, $\widetilde{\mathsf{Path}}_j^{(i)}$ includes real traversal paths padded to fixed length $\ell_{\max}$ and interspersed with dummy random walks, and $K$ is the result set size. The adversary cannot distinguish real paths from dummy paths.

Crucially, $\mathcal{L}$ does NOT include: plaintext embeddings $\{\mathbf{v}_i\}$, queries $\{\mathbf{q}^{(i)}\}$, similarity scores $\{\langle \mathbf{q}^{(i)}, \mathbf{v}_j \rangle\}$, exact ranking relationships, or the distinction between real and dummy accesses.
\end{definition}

\section{Supplementary Security Proofs}
\label{app:security_proofs}

\subsection{Proof of Lemma~\ref{lem:oee-ranking}}
\begin{proof}
The encrypted similarity difference satisfies
\[
s_i' - s_j' = (s_i - s_j) + (\xi_i - \xi_j).
\]
By the triangle inequality,
\[
|(\xi_i - \xi_j)| \le 2\epsilon_{\mathrm{OEE}} = \Delta .
\]
Hence, when $|s_i - s_j| < \Delta$, the sign of $s_i' - s_j'$ is dominated by encryption noise and approximation error. As a result, the probability that $s_i' > s_j'$ is negligibly close to the probability that $s_j' > s_i'$.

Therefore, the relative ordering of $d_i$ and $d_j$ is statistically indistinguishable from a random permutation from the adversary's perspective.
\end{proof}

\subsection{Proof of Theorem 1}
\begin{proof}
We prove via a sequence of hybrid games that the adversary's advantage in breaking semantic confidentiality is negligible.

\noindent\textbf{Game 0:} Real execution. $\mathcal{A}$ observes $\mathcal{CT}_{\mathcal{V}} = \{\Enc(\vect{v}_i)\}$, encrypted centroids, HNSW graphs, and intermediates from Chebyshev-based clustering and retrieval.

\noindent\textbf{Game 1:} Replace all encryptions with encryptions of random vectors $\vect{r}_i$. By IND-CPA of CKKS, the difference in $\mathcal{A}$'s success probability between Game 0 and Game 1 is negligible.

\noindent\textbf{Game 2:} Simulate homomorphic operations on random ciphertexts. Since homomorphic additions and multiplications preserve IND-CPA, and the Chebyshev surrogate is a polynomial function, intermediates remain indistinguishable.

\noindent\textbf{Ranking Hiding via OEE-Induced Ambiguity.} By Lemma~\ref{lem:oee-ranking}, whenever two candidates fall within the $\Delta$-margin, the cloud server cannot determine their relative order with non-negligible advantage.

\noindent\textbf{Implication for Access Patterns.} All routing decisions in clustering and HNSW traversal are driven by Chebyshev-derived weights computed from $\{s_i'\}$. By Lemma~\ref{lem:oee-ranking}, whenever competing candidates fall within the $\Delta$-margin, these routing decisions behave as randomized choices. Consequently, the observed access patterns are statistically independent of the true semantic ranking, up to negligible probability.

\noindent\textbf{Game 3:} Simulate access patterns using a leakage simulator $\Sim_{\mathcal{L}}$ that generates paths based solely on index size and structure, such as random cluster probes and graph traversals matching the real distribution. Since real paths depend only on approximated Chebyshev-derived weights rather than exact rankings, and OEE ensures that noise and approximation bounds $\Delta$ make close rankings ambiguous, the simulated paths are statistically close to real ones. The difference in $\mathcal{A}$'s success probability between Game 2 and Game 3 is negligible.

In Game 3, $\mathcal{A}$'s view is independent of plaintexts, proving data and query privacy. Result privacy follows from ranking ambiguity: since the Chebyshev surrogate smooths decisions, exact orderings are not revealed, and $\Delta$ bounds inferable semantic relationships.

Thus, $\Adv_{\mathcal{A}}^{\text{sem-conf}} \leq \negl(\lambda)$.
\end{proof}

\subsection{Proof of Theorem 2}
\begin{proof}
The proof follows a similar game sequence as in Theorem 1. The key difference is that during clustering, insertion, and retrieval, the client decrypts limited intermediates to resolve comparisons, revealing exact local orderings such as the selected cluster and the nearest neighbor.

\noindent\textbf{Game 0:} Real PRAG-\uppercase\expandafter{\romannumeral2} execution. During clustering and insertion, for each distance set $\{\ct_{d_{ij}}\}$, the client decrypts and returns exact assignments or links. For retrieval, per layer, the client decrypts candidate distances and returns exact nearest or top-$M$ choices. $\mathcal{A}$ observes these choices, such as the selected cluster $j^*$ and the entry points.

\noindent\textbf{Game 1:} As in PRAG-\uppercase\expandafter{\romannumeral1} Game 1, replace encryptions with random vectors. By IND-CPA of CKKS, $|\Pr[\mathcal{A} \text{ wins Game 0}] - \Pr[\mathcal{A} \text{ wins Game 1}]| \leq \negl(\lambda)$.

\noindent\textbf{Game 2:} Simulate homomorphic computations such as $\algofont{HomoDist}$. Identical to PRAG-\uppercase\expandafter{\romannumeral1} Game 2: the difference is 0.

\noindent\textbf{Game 3:} Simulate access patterns and interacted orderings with $\Sim_{\mathcal{L}}$. For non-interacted parts, use the simulation from PRAG-\uppercase\expandafter{\romannumeral1}. For interactions, since $\mathcal{A}$ does not observe decrypted distances and observes only the resulting choices, simulate choices as random orderings over candidate sets whose sizes match the real protocol, such as size $C$ for clusters and layer-dependent sizes for HNSW. Leakage is bounded: per query, at most $O(L)$ orderings are revealed, where $L$ is the maximum number of HNSW layers, and these orderings are localized to small candidate sets. Since candidates are random from Game 1, revealed orderings do not correlate with global plaintext rankings, and thus $|\Pr[\mathcal{A} \text{ wins Game 2}] - \Pr[\mathcal{A} \text{ wins Game 3}]| \leq \negl(\lambda)$.

In Game 3, the adversary's view is independent of plaintexts except for explicit, bounded ordering leakage, proving data and query privacy. Result privacy holds because the final top-$k$ set is computed client-side and never revealed to the server, while intermediate ciphertexts hide similarity scores.

The non-interactive encryption components are identical to those in PRAG-\uppercase\expandafter{\romannumeral1}, so their security follows directly. Therefore, $\Adv_{\mathcal{A}}^{\text{sem-conf}} \leq \negl(\lambda)$.
\end{proof}

\subsection{Proof of Theorem 3}
\begin{proof}
We prove via a sequence of games that the adversary's graph reconstruction advantage is bounded.

\noindent\textbf{Game 0 (Real Execution with Mitigation):} The adversary observes the perturbed access patterns $\mathsf{Path}_{\text{obs}}$ for $Q_E$ queries within a single epoch. Each observation consists of real traversal paths mixed with a $\rho$-fraction of dummy paths.

\noindent\textbf{Game 1 (Indistinguishability of Real and Dummy Paths):} We replace all real query traversal paths with independent random walks. Since CKKS ciphertexts are IND-CPA secure, the computational operations performed during real traversals and dummy traversals are indistinguishable to the server. Specifically, both real and dummy accesses involve reading encrypted nodes, computing homomorphic inner products, and returning encrypted results. Without the secret key, the server cannot determine whether a traversal follows the greedy HNSW routing or a random walk. The distinguishing advantage between Game 0 and Game 1 is bounded by $\negl(\lambda)$ via a standard hybrid argument over the $Q_E \cdot (1+\rho) \cdot \ell_{\max}$ ciphertext operations.

\noindent\textbf{Game 2 (Independent Epochs):} Since each epoch uses fresh keys, fresh random seeds for clustering, and fresh HNSW level assignments, the access patterns in epoch $e$ are statistically independent of the graph structure in epoch $e' \neq e$. Thus, the adversary's observation is limited to $Q_E$ queries per epoch.

\noindent\textbf{Graph Reconstruction Bound:} Consider the adversary's task of recovering an edge $(u,v)$ in the HNSW graph. An edge can only be detected if both $u$ and $v$ appear consecutively in a real non-dummy traversal path. Within a single epoch, the probability that a specific edge is traversed by a random query is at most $\frac{1}{|\mathcal{D}|}$. With $Q_E$ observed queries, each containing $(1+\rho)$-diluted paths, the expected number of observations of any specific edge is
\begin{equation}
\mathbb{E}[\text{obs}(u,v)] \leq \frac{Q_E}{(1+\rho) \cdot |\mathcal{D}|}.
\end{equation}
For LAA reconstruction to succeed with high confidence, the adversary needs $\Omega(|\mathcal{D}|)$ observations per edge~\cite{kellaris2016generic}. By choosing epoch length $E$ such that $Q_E \ll (1+\rho) \cdot |\mathcal{D}|$, the adversary's reconstruction advantage remains negligible.

Combining the three games, we obtain
\[
\Adv_{\mathcal{A}}^{\text{graph-recon}} \leq \frac{Q_E}{(1+\rho) \cdot |\mathcal{D}|} + \negl(\lambda).
\]
\end{proof}


\section{Protocols Supplement}\label{algorithm}

\subsection{Homomorphic Normalization (\algofont{HomoNorm})}
Let $\ct_{\mathsf{sum}}$ denote a ciphertext encrypting a vector 
$\vect{u} \in \mathbb{R}^d$, and let $\ct_{\mathsf{count}}$ denote a ciphertext
encrypting a positive scalar $c \in \mathbb{R}_{>0}$.
The homomorphic normalization operation, denoted by $\algofont{HomoNorm}$, computes
\[
\ct_{\mu} \gets \algofont{HomoNorm}(\ct_{\mathsf{sum}}, \ct_{\mathsf{count}})
= \mathsf{Enc}\!\left(\frac{\vect{u}}{c}\right),
\]
using a polynomial approximation of the reciprocal function $x \mapsto 1/x$.

\subsection{Additional Protocol}
\begin{algorithm}[!htbp]
\SetAlgoLined
\SetKwInOut{KwIn}{KwIn}
\SetKwInOut{KwOut}{KwOut}
\footnotesize
\caption{Protocol $\Pi_{\mathsf{LocalRAG}}$}
\label{protocol:LocalRAG}
\KwIn{Aggregated retrieval results, cluster scores, original query $q$, local database $DB_{\text{local}}$, and target size $K$.}
\KwOut{Generated response $R$.}
$\mathit{doc\_score} \gets \emptyset$\;
\ForEach{$(j^\ast,\mathit{PartialResults})$ in aggregated results}{
  $\mathit{cluster\_weight} \gets \mathit{cluster\_score}[j^\ast]$\;
  \ForEach{$\ID_i$ at position $\mathit{rank}$ in $\mathit{PartialResults}$}{
    $\mathit{rrf\_score} \gets 1/(60+\mathit{rank})$\;
    $\mathit{doc\_score}[\ID_i] \gets \mathit{doc\_score}[\ID_i] + \mathit{rrf\_score}\cdot\mathit{cluster\_weight}$\;
  }
}
Select top-$k$ document identifiers from $\mathit{doc\_score}$ as $\mathit{Candidates}$\;
$\{\ID_1^\ast,\ldots,\ID_K^\ast\} \gets \mathsf{FuseRerank}(q,\mathit{Candidates},DB_{\text{local}},K)$\;
$\mathit{Context} \gets \text{""}$\;
\ForEach{$\ID_i^\ast$ in selected identifiers}{
  $d_i^\ast \gets \mathsf{GetLocalData}(DB_{\text{local}}, \ID_i^\ast)$\;
  $\mathit{Context} \gets \mathit{Context} \oplus \mathsf{ExtractText}(d_i^\ast)$\;
}
$R \gets \mathsf{LLM.Gen}(\mathit{Prompt})$\;
\Return $R$\;
\end{algorithm}


\section{Supplementary OEE Analysis}
\label{sec:noise_analysis}
\noindent This appendix complements the main-text discussion in Section~\ref{sec:correctness_model} by giving the phase-wise derivation behind the noise counts and mitigation rationale.

\subsection{Understanding CKKS Noise Characteristics}

Before comparing noise across different operational phases, we first characterize how fundamental homomorphic operations contribute to noise growth in CKKS. Homomorphic addition exhibits linear, gradual noise growth: $\text{Noise}(\text{ct}_1 + \text{ct}_2) \approx \text{Noise}(\text{ct}_1) + \text{Noise}(\text{ct}_2)$. In contrast, homomorphic multiplication is the primary driver of noise accumulation, with quadratic behavior: $\text{Noise}(\text{ct}_1 \times \text{ct}_2) \approx C \cdot \text{Noise}(\text{ct}_1) \cdot \text{Noise}(\text{ct}_2)$. Rescaling operations, required after multiplication to maintain numerical stability, introduce minor noise while increasing relative noise. Rotation operations contribute moderate noise, typically greater than addition but less severe than multiplication.

A single homomorphic inner product $\langle \text{ct}_q, \text{ct}_{\text{db}} \rangle$ between two encrypted $d$-dimensional vectors requires one component-wise multiplication, one relinearization operation for key switching, one rescaling operation, and approximately $\log(d)$ rotations and additions for summation. We denote the noise introduced by this complete circuit as our basic unit $\mathcal{S}_{\text{IP}}$, which serves as the fundamental measure for comparing computational costs across different phases.

\subsection{Comparative Analysis of Noise Sources}

The total noise in our system originates from two primary phases: the online retrieval phase and the offline index construction phase. To understand which phase dominates the noise budget, we analyze the computational complexity of each in terms of the number of noise-intensive homomorphic operations.

\subsubsection{Noise in the Retrieval Phase}

A single query execution involves a sequence of independent $\algofont{HomoIP}$ computations. The total number of such operations consists of two components. First, during cluster pruning, the encrypted query vector is compared with all $C$ encrypted cluster centroids, incurring $C \times \mathcal{S}_{\text{IP}}$ of noise spread across $C$ independent computations. Second, during HNSW search within the top $C_{\text{probe}}$ selected clusters, a typical greedy traversal from the top layer to the bottom involves approximately $L \cdot \log(M)$ inner product operations per cluster, where $L$ denotes the number of layers and $M$ represents the maximum node degree.

The overall computational cost can be expressed as:
\begin{equation}
    \text{Query}_{\text{cost}} \approx C + C_{\text{probe}} \cdot L \cdot \log(M) \cdot \ct_{s_j}
\end{equation}

Crucially, the noise from these operations does not accumulate sequentially. Each similarity score is computed independently. The primary risk here is that the noise in any single $\algofont{HomoIP}$ computation could be large enough to flip the ranking between two closely scored vectors. However, the total computational load for a single query remains relatively modest, typically in the order of hundreds of operations.

\subsubsection{Noise in the Index Construction Phase}

In stark contrast, the offline index construction phase represents a greater computational burden and is the dominant source of noise concern. This phase comprises two major components, each with notably different noise profiles.

\textbf{K-means Clustering Noise}
The K-means algorithm is the most computationally intensive component. Each iteration has two steps. In the assignment step, every data point is compared with all cluster centroids, requiring $N \cdot C$ independent $\algofont{HomoIP}$ computations per iteration. For example, with a database of $10^5$ vectors and $100$ clusters, one iteration requires $10^7$ inner-product evaluations. The update step computes cluster means through $(\vert \mathcal{D} \vert/C - 1)$ additions per cluster, followed by scalar multiplication for normalization. Although less noisy than assignment, it still contributes to the overall budget.

With $T$ iterations, the total K-means cost is:
\begin{equation}
    \text{K-means}_{\text{cost}} = T \cdot \vert \mathcal{D} \vert \cdot C \cdot \ct_{s_j}
\end{equation}

\textbf{HNSW Construction Noise}
Building the HNSW graph structure requires inserting each of the $\vert \mathcal{D} \vert$ data points into the hierarchical graph. Each insertion involves a layer-wise greedy search from top to bottom, with neighbor selection at each layer using search depth $\text{Dep}_h$. This process requires approximately $L \cdot \text{Dep}_h$ inner products per insertion. The cumulative construction cost is:
\begin{equation}
    \text{HNSW}_{\text{cost}} = \vert \mathcal{D} \vert \cdot L \cdot \text{Dep}_h \cdot \ct_{s_j}
\end{equation}

\subsubsection{Quantitative Comparison}

Table~\ref{tab:noise_comparison} compares noise across system operations and highlights a clear disparity in computational burden. The dominant sources are K-means iterations and full index construction. Since full index construction is unavoidable, mitigation should prioritize reducing noise from K-means iterations.

\subsubsection{Conclusion of Analysis}

The analysis shows that index construction, especially iterative K-means clustering, dominates the required noise budget. The K-means assignment step alone executes 4--5 orders of magnitude more inner products than a single query. The resulting volume of homomorphic multiplications in this offline phase strongly influences the minimum cryptographic parameters needed for correct system operation. CKKS parameters should therefore be sized for offline construction, not for the comparatively light online query phase. Once sized for construction, the query phase remains within budget. This implies that mitigation should focus primarily on noise generated during index construction.



\end{document}